\newtheorem{Theorem}{Theorem}[section]
\newtheorem{Lemma}[Theorem]{Lemma}
\newtheorem{Proposition}[Theorem]{Proposition}
\newtheorem{Corollary}[Theorem]{Corollary}
\newtheorem{Remark}[Theorem]{Remark}
\newtheorem{Definition}[Theorem]{Definition}
\newtheorem*{Definition*}{Definition}
\DeclareMathOperator{\im}{im}
\DeclareMathOperator{\vol}{vol}
\definecolor{darkgreen}{rgb}{0.0, 0.5, 0.0}
\definecolor{darkred}{rgb}{0.7, 0.11, 0.11}
\DeclareRobustCommand{\cev}[1]{%
  {\mathpalette\do@cev{#1}}%
}
\newcommand{\do@cev}[2]{%
  \vbox{\offinterlineskip
    \sbox\z@{$\m@th#1 x$}%
    \ialign{##\cr
      \hidewidth\reflectbox{$\m@th#1\vec{}\mkern4mu$}\hidewidth\cr
      \noalign{\kern-\ht\z@}
      $\m@th#1#2$\cr
    }%
  }%
}
\begin{document}

\author{Charlie Beil}
\address{Institut f\"ur Mathematik und Wissenschaftliches Rechnen, Universit\"at Graz, Heinrichstrasse 36, 8010 Graz, Austria.}
 \email{charles.beil@uni-graz.at}
\title[A derivation of the standard model particles from internal spacetime]{A derivation of the standard model particles\\ from the Dirac Lagrangian on internal spacetime} 
\keywords{Composite or preon model, standard model of particle physics, spinors, spin geometry, spacetime geometry, non-Noetherian geometry.}

\begin{abstract}
`Internal spacetime' is a modification of general relativity that was recently introduced as an approximate spacetime geometric model of quantum nonlocality.
In an internal spacetime, time is stationary along the worldlines of fundamental (dust) particles. 
Consequently, the dimensions of tangent spaces at different points of spacetime vary, and spin wavefunction collapse is modeled by the projection from one tangent space to another.  
In this article we develop spinors on an internal spacetime, and construct a new Dirac-like Lagrangian $\mathcal{L} = \bar{\psi}(i \slashed \partial - \hat{\omega}) \psi$ whose equations of motion describe their couplings and interactions. 
Furthermore, we show that hidden within $\mathcal{L}$ is the entire standard model: $\mathcal{L}$ contains precisely three generations of quarks and leptons, the electroweak gauge bosons, the Higgs boson, and one new massive spin-$2$ boson; gluons are considered in a companion article.
Specifically, we are able to derive the correct spin, electric charge, and color charge of each standard model particle, as well as predict the existence of a new boson.
\end{abstract}

\maketitle

\tableofcontents

\section{Introduction} \label{intro}

The aim of internal spacetime geometry, recently introduced in \cite{B6}, is to model quantum phenomena with spacetime metrics that are degenerate. 
The geometry is obtained by replacing the worldlines of fundamental dust particles by single points, but \textit{without contracting the worldlines to points} as is done when taking geometric or topological quotients.
Time then remains stationary for free fundamental dust particles.
The framework for this model was introduced in the study of nonnoetherian coordinate rings in algebraic geometry \cite{B3,B4}, which in turn arose from the study of non-superconformal brane tilings in string theory \cite{HK, FHVWK,B5}.

We begin with the basic construction of the geometry.

\begin{Definition} \rm{
Let $(\tilde{M},g)$ be an orientable Lorentzian $4$-manifold.
Consider a (locally finite) set of dust particles on $\tilde{M}$ with worldlines $\beta_i \subset \tilde{M}$.
We call the set
\begin{equation*}
M := (\tilde{M} \setminus (\cup_i \beta_i)) \cup (\cup_j \{\beta_j \}),
\end{equation*}
where each $\beta_j$ is a single point of $M$, 
an \textit{internal spacetime}, or simply \textit{spacetime}. 
We call $\tilde{M}$ the \textit{external spacetime} of $M$, and the dust particles \textit{pointons}.
}\end{Definition}

We want to define an `internal metric' for which the worldline $\beta$ of a pointon, although a continuum of distinct 0-dimensional points in $\tilde{M}$, may be viewed as a single `1-dimensional point' in $M$; that is, we do not want to `throw away' $\tilde{M}$ in constructing $M$.

Since $\beta$ is a single point of $M$, a tangent vector, or $4$-velocity, cannot be defined along $\beta$ in $M$.
Therefore, to construct an internal metric at a point $p \in \tilde{M}$ from the external metric $g_{ab}$, it must project out each vector $v$ tangent to a (geodesic) pointon worldline $\beta \subset \tilde{M}$ at $p$. 
Recall the orthogonal projection of a timelike unit vector $v$:
\begin{equation*}
[v]_{ab} := g_{ab} - v_av_b. 
\end{equation*}
The case where $v$ is null is more involved, and is given in \cite[Section 5]{B6}.

\begin{Definition} \rm{
Fix a point $p \in \tilde{M}$.
We call the metric $g_{ab}: \tilde{M}_p \otimes \tilde{M}_p \to \mathbb{R}$ an \textit{external metric} at $p$.
Let $v_1, \ldots, v_n \in \tilde{M}_p$ be the tangent vectors to the pointon worldlines $\beta_1, \ldots, \beta_n$ at $p$.
The corresponding \textit{internal metric} is the degenerate symmetric rank-$2$ tensor given by the composition of projections
\begin{equation} \label{im}
h = h_p = \tensor{h}{^a_b} := \tensor{[v_1]}{^a_c}\tensor{[v_2]}{^c_d} \cdots \tensor{[v_n]}{^e_b} : \tilde{M}^*_p \otimes \tilde{M}_p \to \mathbb{R}.
\end{equation}
The \textit{(internal) tangent space} at $p$ is then the image of $h$ at $p$,
\begin{equation*}
M_p := \im h = \{ v^a \in \tilde{M}_p \, | \, \tensor{h}{^a_b}\tensor{v}{^b} = \tensor{v}{^a} \} \subseteq \tilde{M}_p.
\end{equation*}
}\end{Definition}

\begin{Remark} \rm{
A timelike geodesic $\beta \subset \tilde{M}$ is parameterized by its proper time. 
A question, then, is whether this proper time is lost in constructing $M$, since in $M$ all the interior points of a pointon worldline $\beta$ are identified.
The proper time does indeed disappear for particles within $\beta$, but the embedding image of $\beta$ in $\tilde{M}$ remains intact in $M$ under the internal metric $h$, that is, $\beta$ is not contracted to a $0$-dimensional point.
Thus, the proper time continues to parameterize $\beta$ to \textit{observers outside of $\beta$}, and this is why we call $\beta$ a $1$-dimensional point of $M$.
 }\end{Remark}

Recall that an orientation of a vector space $V$ is given by fixing an ordered basis $\mathcal{B}$ of $V$, and declaring any ordered basis to be positive (resp.\ negative) if it can be obtained from $\mathcal{B}$ by a base change with a positive (resp.\ negative) determinant.
As we shall find, subspace orientation plays an essential role in internal spacetime geometry.

Since the $4$-velocity $v \in \tilde{M}_{\beta(t)}$ of a pointon vanishes on $M$, $h(v) = \tensor{h}{^a_b}v^b = 0$, we want to replace $v$ with a new geometric object that is intrinsic to spacetime $M$ and independent of external spacetime $\tilde{M}$.

\begin{Definition} \rm{
The \textit{internal $4$-velocity} of a pointon with worldline $\beta \subset \tilde{M}$ and $4$-velocity $v$ is the pseudo-form
\begin{equation*}
\breve{v}_{a \cdots b} := o_{\operatorname{ker}h} \star \! \vol (\ker h) \in {\bigwedge} \! ^{\dim M_{\beta(t)}} \, M_{\beta(t)}^*,
\end{equation*}
where $o_{\operatorname{ker}h} \in \{ \pm 1 \}$ is a free parameter independent of any orientation of $\tilde{M}_{\beta(t)}$, and $\star \! \vol (\ker h)$ is the Hodge dual of the volume form of the kernel of $h$.
Note that the rank of $\breve{v}$ changes along $\beta \subset \tilde{M}$ whenever the dimension of the tangent space $M_{\beta(t)}$ changes.
}\end{Definition}

Let $\beta \subset \tilde{M}$ be a timelike pointon worldline with $4$-velocity $v$, and let $e_0, \ldots, e_3$ be an orthonormal tetrad along $\beta$ for which $e_0 = v$.
Fix $p = \beta(t) \in \tilde{M}$.

$\bullet$ If $\dim M_p = 3$, then the internal $4$-velocity at $p$ is
\begin{equation*}
\breve{v}^{abc} = o_0 \, e_1 \wedge e_2 \wedge e_3,
\end{equation*}
where $o_0 \in \{ \pm 1 \}$ is a free choice of time orientation (in the rest frame of the pointon), independent of any orientation of $\tilde{M}_p$. 
\textit{We identify $o_0$ with the electric charge of the pointon.}
Although this is similar to the St\"uckelberg-Feynman interpretation of antimatter in quantum field theory \cite{St}, time does not flow along $\beta$: time does not flow backwards along $\beta$ just as it does not flow forwards, since $\beta$ is a single point of spacetime $M$.

$\bullet$ If $\dim M_p = 1$ and $\ker h$ is spanned by, say, $e_0, e_1, e_2$, then 
\begin{equation} \label{spin vector s}
\breve{v}^a = o_0o_{12} \, e_3,
\end{equation}
where $o_{12} \in \{ \pm 1 \}$ is a free choice of orientation of the plane spanned by $e_1$ and $e_2$.
We call $\breve{v}^a$ the \textit{spin vector} of the pointon at $p$, and identify $o_{12}$ as spin, up $\uparrow$ or down $\downarrow$, in the $e_3$ direction.
The vector $s := \breve{v}^a$ is then parallel transported along $\beta$ until it is projected under $h$ onto the sequent $1$-dimensional tangent space $M_{\beta(t')}$.

Consider a pointon with spin vector $s = \breve{v}^a$ and worldline $\beta \subset \tilde{M}$ such that the dimension of the tangent space at $\beta(0)$ is a local minimum along $\beta$.
If $|h(s)| \not = 0$, denote by $\hat{h}(s) := h(s)/|h(s)|$ the normalization of $h(s)$, and denote by $M_{p \to q}$ the parallel transport of $M_p$ to $q$ along $\beta$.
We have the following:

\begin{itemize}
 \item[\textsc{(a)}] As $s$ enters a lower dimensional internal space at $\beta(0)$, it is projected onto $M_{\beta(0)}$ by the internal metric $h: \tilde{M}_{\beta(0)} \to M_{\beta(0)}$.
\item[\textsc{(b)}] As $h(s)$ \textit{exits} a lower dimensional internal space at $\beta(0)$, the time reversal of (\textsc{a}) occurs: a unit vector $s' \in M_{\beta(\epsilon) \to \beta(0)} \subset \tilde{M}_{\beta(0)}$ is chosen for which
\begin{equation} \label{equals}
h(s') \left| h(s) \right| = h(s) \left| h(s') \right|, \ \ \ \ \text{ or equivalently, } \ \ \ \ 
h(s) \! \cdot \! s' \geq 0.
\end{equation}
This simplifies to $\hat{h}(s) = \hat{h}(s')$ whenever $h(s)$ and $h(s')$ are nonzero; and if $h(s) = 0$, then (\ref{equals}) implies that $s'$ is unconstrained.
If $h(s) \not = 0$, then the probability that $s'$ is chosen is given by the \textit{Kochen-Specker probability}: 
\begin{equation*}
p(s' | h(s)) = \tfrac{1}{\pi} \hat{h}(s) \! \cdot \! s'.
\end{equation*}
\end{itemize}

\begin{equation*}
\xymatrix@R-2pc{
\tilde{M}_{\beta(-\epsilon)} & \tilde{M}_{\beta(0)} \ar^{h}[rdd]
& \tilde{M}_{\beta(0)} & \tilde{M}_{\beta(0)}  \ar_{h}[ldd]
& \tilde{M}_{\beta(\epsilon)}\\
\cup & \cup & \cup & \cup & \cup \\
M_{\beta(-\epsilon)} \ar^{\cong \ \ \ \ \ }[r] & M_{\beta(-\epsilon) \to \beta(0)} & M_{\beta(0)}
& M_{\beta(\epsilon) \to \beta(0)} \ar^{\ \ \cong}[r] & M_{\beta(\epsilon)} \\
s \ar@{|->}[r] & s \ar@{|->} [r] 
\ar@{}_{\substack{ \ \\ \ \\ \text{\footnotesize{wavefunction}} \\ \text{\footnotesize{collapse}}}} [r] &
\hat{h}(s) = \hat{h}(s') \ar@{<-|} [r] 
\ar@{}_{\substack{ \ \\ \ \\ \text{ \ \ \footnotesize{randomness}} \\ 
\text{ \ \ \footnotesize{\textcolor{white}{l} appears \textcolor{white}{l}}}}} [r] & s' \ar@{|->}[r] & s' 
} 
\end{equation*}

Suppose $s$ exits a $1$-dimensional tangent space $M_p$ at $p \in \tilde{M}$, is parallel transported along $\beta$ for some time, and then enters another $1$-dimensional tangent space $M_q$ at $q$.
We then say the spin $s$ of the pointon is \textit{prepared} at $p$ and \textit{measured} at $q$.
We thus obtain a spacetime geometric realization of the Kochen-Specker model of spin \cite{KS}.
This implies, in particular, that the Born rule for spin holds in our model.

\begin{table}
\caption{A composite model of the standard model particles with precisely one new massive spin-$2$ boson, denoted $x$, derived from the Dirac equation. 
In short, a `geom' is a mass term of the Dirac Lagrangian. 
Subscripts denote spin states.
Here, ${a,b \in \{\uparrow, \downarrow \}}$.}
\label{table1}
\begin{center}
    \begin{tabular}{l||rrrr}
elec.\ charge & \multicolumn{4}{l}{all possible geoms $[\psi^-_1 \psi^+_1, \psi^-_2 \psi^+_2] = [\psi^-_2 \psi^+_2, \psi^-_1 \psi^+_1]$}\\
\hhline{=====}
$0$ & & $\gamma_{\uparrow} = [ \uparrow \downarrow, 00]$ & $Z_{\uparrow} = [ \uparrow \uparrow, 00]$ & $x_{ab} = [a 0, 0 b]$\\
(bosonic) & & $\gamma_{\downarrow} = [ \downarrow \uparrow, 00]$ & $Z_{\downarrow} = [ \downarrow \downarrow, 00]$ & $ x_0 = [ \downarrow \! {*}, {*} \! \downarrow]$ \\
& & $Z_0 = [ \uparrow \downarrow, \downarrow \uparrow]$ & $H = [\uparrow \uparrow, \downarrow \downarrow]$ & \\
\hdashline
$0$ & $\nu_e = [00, {*} \! \downarrow]$ & $\nu_{\mu} = [\downarrow \uparrow, {*} \! \downarrow]$ & $\nu_{\tau} = [\uparrow \uparrow, {*} \! \downarrow]$ & \\
(fermionic) & $\bar{\nu}_e = [00, \downarrow \! {*}]$ & $\bar{\nu}_{\mu} = [\uparrow \downarrow, \downarrow \! {*}]$ & $\bar{\nu}_{\tau} = [\uparrow \uparrow, \downarrow \! {*}]$ & \\
\hdashline
$-1$ &$e_{\uparrow} = [00, \uparrow \! 0]$ & $\mu_{\uparrow} = [\downarrow \uparrow, \uparrow \! 0]$ & $\tau_{\uparrow} = [ \downarrow \downarrow, \uparrow \! 0]$ & $\text{\footnotesize{$W^-_{\uparrow}$}} = [{*} \! \downarrow, \uparrow \! 0]$ \\
& $e_{\downarrow} = [00, \downarrow \! 0]$ & $\mu_{\downarrow} = [ \uparrow \downarrow, \downarrow \! 0]$ & $\tau_{\downarrow} = [\uparrow \uparrow, \downarrow \! 0]$ & $\text{\footnotesize{$W_{\downarrow}^-$}} = [{*} \! \downarrow, \downarrow \! 0]$ \\
 & & & & $\text{\footnotesize{$W_{0}^-$}} = [\downarrow \! {*}, \uparrow \! 0]$ \\
\hdashline
$+1$ & $\bar{e}_{\downarrow} = [00, 0 \! \uparrow ]$ & $\bar{\mu}_{\downarrow} = [\uparrow \downarrow, 0 \! \uparrow ]$ & $\bar{\tau}_{\downarrow} = [\downarrow \downarrow, 0 \! \uparrow]$ & $\text{\footnotesize{$W_{\downarrow}^+$}} = [\downarrow \! {*}, 0 \! \uparrow]$ \\
& $\bar{e}_{\uparrow} = [00, 0 \! \downarrow ]$ & $\bar{\mu}_{\uparrow} = [\downarrow \uparrow, 0 \! \downarrow]$ & $\bar{\tau}_{\uparrow} = [\uparrow \uparrow, 0 \! \downarrow]$ &
$\text{\footnotesize{$W_{\uparrow}^+$}} = [\downarrow \! {*}, 0 \! \downarrow]$\\
 & & & & $\text{\footnotesize{$W_{0}^+$}} = [{*} \! \downarrow, 0 \! \uparrow]$ \\
\hline
$\tfrac{-1}{3} + 1 = \tfrac{2}{3}$ & $u_{\uparrow} = \textcolor{red}{\pmb{(}} 00, {*} \! \downarrow]$ & $c_{\uparrow} = \textcolor{red}{\pmb{(}} \! \! \downarrow \uparrow, {*} \! \downarrow]$ 
& $t_{\uparrow} = \textcolor{red}{\pmb{(}} \! \! \uparrow \uparrow, {*} \! \downarrow]$ & \\
& $u_{\downarrow} = \textcolor{red}{\pmb{(}} 00, {*} \! \uparrow]$ & $c_{\downarrow} = \textcolor{red}{\pmb{(}} \! \! \uparrow \downarrow, {*} \! \uparrow]$ 
& $t_{\downarrow} = \textcolor{red}{\pmb{(}} \! \! \downarrow \downarrow, {*} \! \uparrow]$ &
\\
\hdashline
$\tfrac{1}{3} - 1 = \tfrac{-2}{3}$ & $\bar{u}_{\downarrow} = [00, \downarrow \! {*} \textcolor{red}{\pmb{)}}$ & $\bar{c}_{\downarrow} = [\uparrow \downarrow, \downarrow \! {*} \textcolor{red}{\pmb{)}}$ & $\bar{t}_{\downarrow} = [\uparrow \uparrow, \downarrow \! {*} \textcolor{red}{\pmb{)}}$ &
\\
 & $\bar{u}_{\uparrow} = [00, \uparrow \! {*} \textcolor{red}{\pmb{)}}$ & $\bar{c}_{\uparrow} = [\downarrow \uparrow, \uparrow \! {*} \textcolor{red}{\pmb{)}}$ & $\bar{t}_{\uparrow} = [\downarrow \downarrow, \uparrow \! {*} \textcolor{red}{\pmb{)}}$ &
\\
\hdashline
$\tfrac{-1}{3}$ & $d_{\uparrow} = \textcolor{red}{\pmb{(}} 00, \uparrow \! 0 ]$ & $s_{\uparrow} = \textcolor{red}{\pmb{(}} \! \! \downarrow \uparrow, \uparrow \! 0 ]$ & $b_{\uparrow} = \textcolor{red}{\pmb{(}} \! \! \downarrow \downarrow, \uparrow \! 0 ]$ & \\
&  $d_{\downarrow} = \textcolor{red}{\pmb{(}} 00, \downarrow \! 0 ]$ & $s_{\downarrow} = \textcolor{red}{\pmb{(}} \! \! \uparrow \downarrow, \downarrow \! 0 ]$ & $b_{\downarrow} = \textcolor{red}{\pmb{(}} \! \! \uparrow \uparrow, \downarrow \! 0 ]$ & \\
\hdashline
$\tfrac{1}{3}$ & $\bar{d}_{\downarrow} = [00, 0 \! \uparrow \! \! \textcolor{red}{\pmb{)}}$ & $\bar{s}_{\downarrow} = [\downarrow \uparrow, 0 \! \uparrow \! \! \textcolor{red}{\pmb{)}}$ & $\bar{b}_{\downarrow} = [\downarrow \downarrow, 0 \! \uparrow \! \! \textcolor{red}{\pmb{)}}$ &\\
& $\bar{d}_{\uparrow} = [00, 0 \! \downarrow \! \! \textcolor{red}{\pmb{)}}$ & $\bar{s}_{\uparrow} = [\uparrow \downarrow, 0 \! \downarrow \! \! \textcolor{red}{\pmb{)}}$ & $\bar{b}_{\uparrow} = [\uparrow \uparrow, 0 \! \downarrow \! \! \textcolor{red}{\pmb{)}}$ &\\
    \end{tabular}
  \end{center}
\end{table}

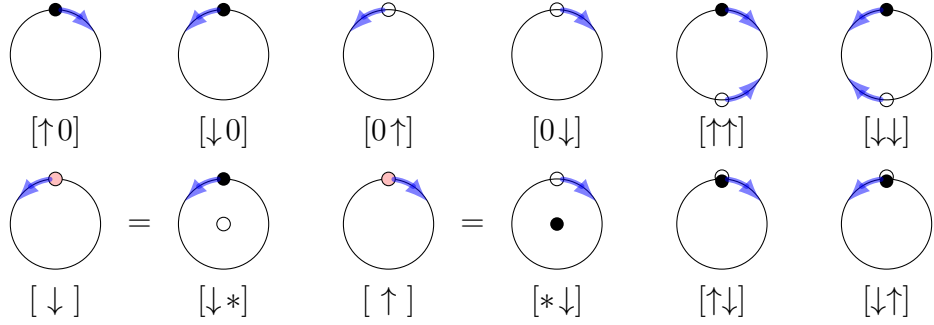
\begin{figure} \label{spin imped}
\begin{equation*}
\begin{array}{ccccccccccc}
\begin{tikzpicture}[scale=3,cap=round,>=latex]
  \draw (0cm,0cm) circle[radius=.2cm];
\fill[radius=.03cm] (0cm,.2cm) circle[];
   \draw[blue, line width=.7mm, opacity=.5,->] (0cm,0cm)+(83:.2cm) arc[start angle=83, end angle=30, radius=.2cm];
 \draw (0cm,-.225cm) node[anchor=north] {${[ \uparrow \! 0]}$};
\end{tikzpicture}
&&
\begin{tikzpicture}[scale=3,cap=round,>=latex]
  \draw (0cm,0cm) circle[radius=.2cm];
\fill[radius=.03cm] (0cm,.2cm) circle[];
   \draw[blue, line width=.7mm, opacity=.5,->] (0cm,0cm)+(97:.2cm) arc[start angle=97, end angle=150, radius=.2cm];
 \draw (0cm,-.225cm) node[anchor=north] {${[ \downarrow \! 0]}$};
\end{tikzpicture}
&&
\begin{tikzpicture}[scale=3,cap=round,>=latex]
  \draw (0cm,0cm) circle[radius=.2cm];
  \draw (0cm,.2cm) circle[radius=.03cm];
   \draw[blue, line width=.7mm, opacity=.5,->] (0cm,0cm)+(97:.2cm) arc[start angle=97, end angle=150, radius=.2cm];
 \draw (0cm,-.225cm) node[anchor=north] {${[0 \! \uparrow]}$};
\end{tikzpicture}
&&
\begin{tikzpicture}[scale=3,cap=round,>=latex]
  \draw (0cm,0cm) circle[radius=.2cm];
 \draw (0cm,.2cm) circle[radius=.03cm];
   \draw[blue, line width=.7mm, opacity=.5,->] (0cm,0cm)+(83:.2cm) arc[start angle=83, end angle=30, radius=.2cm];
 \draw (0cm,-.225cm) node[anchor=north] {${[0 \! \downarrow]}$};
\end{tikzpicture}
& &
\begin{tikzpicture}[scale=3,cap=round,>=latex]
  \draw (0cm,0cm) circle[radius=.2cm];
  \draw (0cm,-.2cm) circle[radius=.03cm];
\fill[radius=.03cm] (0cm,.2cm) circle[];
   \draw[blue, line width=.7mm, opacity=.5,->] (0cm,0cm)+(83:.2cm) arc[start angle=83, end angle=30, radius=.2cm];
   \draw[blue, line width=.7mm, opacity=.5,->] (0cm,0cm)+(-83:.2cm) arc[start angle=-83, end angle=-30, radius=.2cm];
 \draw (0cm,-.225cm) node[anchor=north] {$[ \uparrow \uparrow]$};
\end{tikzpicture}
& 
\raisebox{3.1 em}{ \ }
&
\begin{tikzpicture}[scale=3,cap=round,>=latex]
  \draw (0cm,0cm) circle[radius=.2cm];
  \draw (0cm,-.2cm) circle[radius=.03cm];
\fill[radius=.03cm] (0cm,.2cm) circle[];
   \draw[blue, line width=.7mm, opacity=.5,->] (0cm,0cm)+(97:.2cm) arc[start angle=97, end angle=150, radius=.2cm];
   \draw[blue, line width=.7mm, opacity=.5,->] (0cm,0cm)+(-97:.2cm) arc[start angle=-97, end angle=-150, radius=.2cm];
 \draw (0cm,-.225cm) node[anchor=north] {$[ \downarrow \downarrow]$};
\end{tikzpicture}
\\
\begin{tikzpicture}[scale=3,cap=round,>=latex]
  \draw (0cm,0cm) circle[radius=.2cm];
\fill[pink, radius=.03cm] (0cm,.2cm) circle[];
\draw (0cm,.2cm) circle[radius=.03cm];
   \draw[blue, line width=.7mm, opacity=.5,->] (0cm,0cm)+(97:.2cm) arc[start angle=97, end angle=150, radius=.2cm];
 \draw (0cm,-.225cm) node[anchor=north] {$[ \ \downarrow \ ]$};
\end{tikzpicture}
& 
\raisebox{3.1 em}{$=$}
&
\begin{tikzpicture}[scale=3,cap=round,>=latex]
  \draw (0cm,0cm) circle[radius=.2cm];
  \draw (0cm, 0cm) circle[radius=.03cm];
\fill[radius=.03cm] (0cm,.2cm) circle[];
   \draw[blue, line width=.7mm, opacity=.5,->] (0cm,0cm)+(97:.2cm) arc[start angle=97, end angle=150, radius=.2cm];
 \draw (0cm,-.225cm) node[anchor=north] {$[ \downarrow \! {*}]$};
\end{tikzpicture}
&
\raisebox{3.1 em}{ \ }
&
\begin{tikzpicture}[scale=3,cap=round,>=latex]
  \draw (0cm,0cm) circle[radius=.2cm];
\fill[pink, radius=.03cm] (0cm,.2cm) circle[];
\draw (0cm,.2cm) circle[radius=.03cm];
  \draw[blue, line width=.7mm, opacity=.5,->] (0cm,0cm)+(83:.2cm) arc[start angle=83, end angle=30, radius=.2cm];
 \draw (0cm,-.225cm) node[anchor=north] {$[ \ \uparrow \  ]$};
\end{tikzpicture}
& 
\raisebox{3.1 em}{$=$}
&
\begin{tikzpicture}[scale=3,cap=round,>=latex]
  \draw (0cm,0cm) circle[radius=.2cm];
\fill[radius=.03cm] (0cm,0cm) circle[];
 \draw (0cm,.2cm) circle[radius=.03cm];
   \draw[blue, line width=.7mm, opacity=.5,->] (0cm,0cm)+(83:.2cm) arc[start angle=83, end angle=30, radius=.2cm];
 \draw (0cm,-.225cm) node[anchor=north] {$[{*} \! \downarrow]$};
\end{tikzpicture}
&
\raisebox{3.1 em}{ \ }
&
\begin{tikzpicture}[scale=3,cap=round,>=latex]
  \draw (0cm,0cm) circle[radius=.2cm];
\draw (0cm,.21cm) circle[radius=.03cm];
\fill[radius=.03cm] (0cm,.19cm) circle[];
 \draw (0cm,-.225cm) node[anchor=north] {$[ \uparrow \downarrow]$};
   \draw[blue, line width=.7mm, opacity=.5,->] (0cm,0cm)+(83:.2cm) arc[start angle=83, end angle=30, radius=.2cm];
\end{tikzpicture}
 & &
\begin{tikzpicture}[scale=3,cap=round,>=latex]
  \draw (0cm,0cm) circle[radius=.2cm];
\draw (0cm,.21cm) circle[radius=.03cm];
\fill[radius=.03cm] (0cm,.19cm) circle[];
 \draw (0cm,-.225cm) node[anchor=north] {$[ \downarrow \uparrow]$};
   \draw[blue, line width=.7mm, opacity=.5,->] (0cm,0cm)+(97:.2cm) arc[start angle=97, end angle=150, radius=.2cm];
\end{tikzpicture}
\end{array}
\end{equation*}
\caption{All geom orbitals, depicted as spinor particles.
The spin-$0$ field $*$ in ${[\downarrow \! {*}]}$ and ${[{*} \! \downarrow]}$ is indicated pictorially by a spinor particle with $r = 0$.}
\end{figure}

The article is organized as follows.
In Section \ref{spinor chirality section} we show that on an internal spacetime, the two direct summands of the standard decomposition 
\begin{equation*}
\mathfrak{so}(1,3)_{\mathbb{C}} \cong \mathfrak{su}(2)_{\mathbb{C}} \oplus \mathfrak{su}(2)_{\mathbb{C}}
\end{equation*}
correspond to positive and negative electric charges. 
Consequently, a $\gamma^5$ eigenspinor with eigenvalue $1$ (resp.\ $-1$) has negative (resp.\ positive) electric charge.
We will find that this new identification leads to significant and novel changes in the meaning of the Dirac equation.

In Section \ref{4-momentum section} a `spinor particle' of a pointon is introduced to model the pointon's $4$-momentum. 
A spinor is then obtained as a column of the gamma matrix of the spinor particle's $4$-velocity under the internal metric $h$, with the column determined by the pointon's charge. 
Moreover, the speed (or equivalently, radius) of the spinor particle is a new parameter that does not appear using the normalized spinor alone. 
In Section \ref{spin vector section} we show that this new parameter allows for a classical description of off shell particles for which relativity is never violated.

Finally, in Section \ref{Lagrangian section} a Dirac-like Lagrangian is constructed whose equations of motion describe both the coupling of pointons of opposite charge, as well as pointon pair creation and annihilation.
From the combinatorics of this Lagrangian a composite model of the standard model particles is obtained, given in Table \ref{table1}.
To note, our model yields the correct spin, electric charge, and color charge of each standard model particle, and precisely three generations of leptons and quarks.
Our model also predicts the existence of a new massive spin-$2$ boson.
In the companion article \cite{B1}, we investigate standard model interaction vertices and particle masses in this framework.

\begin{Remark} \rm{
There are interesting proposals by Klinkhamer, Battista, and others, where degenerate spacetime metrics are used to regularize the big bang singularity, and it is suggested that these degeneracies have a quantum origin \cite{K1, K2, K3, KW, Ba}.
Degenerate spacetime metrics also arise in the context of loop quantum gravity; see \cite{LW} and references therein. 
}\end{Remark}

\textbf{Notation:} Tensors labeled with upper and lower indices $a,b, \ldots$ represent covector and vector slots respectively in Penrose's abstract index notation (so $v^a \in V$ and $v_a \in V^*$), and tensors labeled with indices $\mu, \nu, \ldots$ denote components with respect to a coordinate basis.
Given a curve $\beta: I \to \tilde{M}$, we often denote its image $\beta(I)$ also by $\beta$.
Natural units $\hbar = c = G = 1$ and the signature $(+,-,-,-)$ are used throughout.

We will use both the chiral and Dirac bases for spinors, and so to specify the basis the subscripts $\mathscr{C}$ and $\mathscr{D}$ will be used.
The change-of-basis matrices are 
\begin{equation} \label{chiral - Dirac}
\tfrac{1}{\sqrt{2}} \! \left[ \begin{smallmatrix} \bm{1} & -\bm{1} \\ \bm{1} & \bm{1} \end{smallmatrix} \right] \psi_{\mathscr{D}} = \psi_{\mathscr{C}} 
\ \ \ \ \text{ and } \ \ \ \ 
\tfrac{1}{\sqrt{2}} \! \left[ \begin{smallmatrix} \bm{1} & \bm{1} \\ -\bm{1} & \bm{1} \end{smallmatrix} \right] \psi_{\mathscr{C}} = \psi_{\mathscr{D}}.
\end{equation}
The gamma matrices in the two bases are 
\begin{equation*}
\gamma^0 = \left[\begin{smallmatrix} \bm{1} &  \\ & -\bm{1} \end{smallmatrix} \right]_{\mathscr{D}} 
= \left[ \begin{smallmatrix} & \bm{1} \\ \bm{1} & \end{smallmatrix} \right]_{\mathscr{C}}, 
\ \ \ \ 
\gamma^j = \left[ \begin{smallmatrix} & \sigma^j \\ - \sigma^j & \end{smallmatrix} \right]_{\mathscr{D}} 
= \left[ \begin{smallmatrix} & \sigma^j \\ -\sigma^j & \end{smallmatrix} \right]_{\mathscr{C}},
\ \ \ \
\gamma^5 = \left[ \begin{smallmatrix} & \bm{1} \\ \bm{1} & \end{smallmatrix} \right]_{\mathscr{D}} 
= \left[ \begin{smallmatrix} -\bm{1} & \\ & \bm{1} \end{smallmatrix} \right]_{\mathscr{C}},
\end{equation*}
where $\bm{1}$ denotes the $2 \times 2$ identity matrix, and
\begin{equation*}
\sigma^1 := \left[ \begin{smallmatrix} & 1 \\ 1 & \end{smallmatrix} \right], \ \ \ \
\sigma^2 := \left[ \begin{smallmatrix} & -i \\ i & \end{smallmatrix} \right], \ \ \ \
\sigma^3 := \left[ \begin{smallmatrix} 1 & \\ & -1 \end{smallmatrix} \right], \ \ \ \
(\gamma^{\mu})_{\mathscr{C}} = \tfrac{1}{\sqrt{2}} \! \left[ \begin{smallmatrix} \bm{1} & -\bm{1} \\ \bm{1} & \bm{1} \end{smallmatrix} \right] (\gamma^{\mu})_{\mathscr{D}} \tfrac{1}{\sqrt{2}} \! \left[ \begin{smallmatrix} \bm{1} & \bm{1} \\ -\bm{1} & \bm{1} \end{smallmatrix} \right].
\end{equation*}

\section{Spinor chirality on an internal spacetime} \label{spinor chirality section}

\subsection{Review of spinors}

Denote by $L := \operatorname{O}(1,3)$ the Lorentz group, and by $L_0 := \operatorname{SO}^+(1,3) \subset L$ the subgroup of proper orthochronous Lorentz transformations.

Following \cite{F}, consider the two $\mathbb{R}$-linear maps 
\begin{equation} \label{v maps}
\begin{array}{rcll}
\mathbb{R}^{1,3} & \longrightarrow & M_2(\mathbb{C}) & \\
v & \mapsto & v_* := v_0 \bm{1} + v_j \sigma^j & = \left[ \begin{smallmatrix} v_0 + v_3 & v_1 - i v_2 \\ v_1 + i v_2 & v_0 - v_3 \end{smallmatrix} \right] \\
v & \mapsto & v^* := v_0 \bm{1} - v_j \sigma^j & = \left[ \begin{smallmatrix} v_0 - v_3 & -v_1 + i v_2 \\ -v_1 - i v_2 & v_0 + v_3 \end{smallmatrix} \right] 
\end{array}
\end{equation}
These maps define the gamma matrices in the chiral basis $\mathscr{C}$,
\begin{equation} \label{gamma matrix map}
\begin{array}{rcl}
\gamma: \mathbb{R}^{1,3} 
& \longrightarrow & M_4(\mathbb{C})\\
v & \mapsto & \left[ \begin{smallmatrix} & v_* \\ v^* & \end{smallmatrix} \right]_{\mathscr{C}} = \gamma(v^{\mu}e_{\mu})  = v^{\mu} \gamma(e_{\mu}) = v^{\mu} \gamma_{\mu} = \slashed v
\end{array}
\end{equation} 
where $\gamma^{\mu}  := \gamma(e^{\mu})$.
The maps (\ref{v maps}) also determine a $2 \! : \! 1$ group homomorphism 
\begin{equation} \label{sl to l} 
\begin{array}{rcl}
\operatorname{Sl}(2, \mathbb{C}) & \stackrel{2:1}{\longrightarrow} & L_0\\
\pm s & \mapsto & \Lambda_{s}
\end{array}
\end{equation}
defined by\footnote{The map (\ref{sl to l}) is well defined: First observe that $\operatorname{det}v_* =\operatorname{det}v^* = v^av_a$ (and $v^av_a \bm{1} = v^* v_* = v_* v^*$).  Whence, $(\Lambda_s v)^a (\Lambda_sv)_a = \operatorname{det} (\Lambda_s v)_* = \operatorname{det} (s v_* s^{\dagger}) = \operatorname{det} s \operatorname{det} v_* \operatorname{det} s^{\dagger} = \operatorname{det} v_* = v^a v_a$.}$^,$\footnote{
Note that (\ref{bb}) may be written in the (physics textbook) form $\rho(s) \tensor{\Lambda}{^{\mu}_{\nu}} \gamma_{\mu} = \gamma_{\nu} \rho(s)$ since it implies $\gamma(\Lambda_s v) = \rho(s) \gamma(v) \rho(s)^{-1}$, and $\gamma(\Lambda v) = \gamma(\tensor{\Lambda}{^{\mu}_{\nu}}v^{\nu}) = \tensor{\Lambda}{^{\mu}_{\nu}} v^{\nu} \gamma(e_{\mu}) = \tensor{\Lambda}{^{\mu}_{\nu}} v^{\nu} \gamma_{\mu}$.} 
\begin{equation} \label{bb}
(\Lambda_s v)_* = s v_* s^{\dagger}, \ \ \ \text{ or equivalently, } \ \ \ (\Lambda_s v)^* = s^{\dagger -1} v^* s^{-1}.
\end{equation}
Furthermore, $\operatorname{Sl}(2,\mathbb{C})$ admits the representation 
\begin{equation} \label{rho rep}
\begin{array}{rcl}
\rho: \operatorname{Sl}(2, \mathbb{C}) & \longrightarrow & \operatorname{Gl}(4, \mathbb{C})\\
s & \mapsto & \left[ \begin{smallmatrix} s & \\ & s^{\dagger -1} \end{smallmatrix} \right]_{\mathscr{C}}
\end{array} 
\end{equation}
In this article a \textit{spinor field} on a locally flat open set $U \subset \tilde{M}$ is a map $\psi: U \to \mathbb{C}^4$, smooth on its support, such that a Lorentz transformation $\Lambda \in L_0$ acts by 
\begin{equation*}
\Lambda \psi(x) := \rho(s) \psi(\Lambda x),
\end{equation*}
where $s \in \operatorname{Sl}(2, \mathbb{C})$ is one of the two elements for which $\Lambda = \Lambda_s$.

\begin{Remark} \label{sign of phase} \rm{
It is standard to identify the sign $\mp$ of the phase $e^{\mp i p_ax^a}$ of a plane-wave spinor $\psi = \left[ \begin{smallmatrix} * \\ * \\ * \\ * \end{smallmatrix} \right] e^{\mp i p_ax^a} \in \operatorname{ker}( i \slashed \partial - m)$ with its description as a particle or an antiparticle, since the sign of the phase appears to coincide with the sign of the energy $\pm |E| = \pm m$ (or time orientation):
\begin{equation*}
\begin{split} 
0 & = (i \slashed \partial - m) u(p)e^{-ip_ax^a} = (\slashed p - m) u(p)e^{-ip_ax^a},\\
0 & = (i \slashed \partial - m) v(p)e^{ip_ax^a} = (- \slashed p - m) v(p)e^{ip_ax^a}.
\end{split}
\end{equation*}
However, there is a space of positive energy solutions spanned by
\begin{equation*} \label{positive}
u(p)e^{-ip_ax^a} = \left[ \begin{matrix} \eta \\ \tfrac{\vec{\sigma} \cdot \vec{p}}{E+m} \eta \end{matrix} \right]_{\mathscr{D}} \! \! e^{-ip_ax^a}, \ \ \ \ \
v(p)e^{ip_ax^a} = \left[ \begin{matrix} \tfrac{\vec{\sigma} \cdot \vec{p}}{E+m} \eta \\ \eta \end{matrix} \right]_{\mathscr{D}} \! \! e^{ip_ax^a},
\end{equation*}
with $\eta \in \mathbb{C}^2$, and a space of negative energy solutions spanned by
\begin{equation*} \label{negative}
\tilde{u}(p)e^{-ip_ax^a} = \left[ \begin{matrix} \tfrac{\vec{\sigma} \cdot \vec{p}}{E-m} \eta \\ \eta \end{matrix} \right]_{\mathscr{D}} \! \! e^{-ip_ax^a}, \ \ \ \ \
\tilde{v}(p)e^{ip_ax^a} = \left[ \begin{matrix} \eta \\ \tfrac{\vec{\sigma} \cdot \vec{p}}{E-m} \eta \end{matrix} \right]_{\mathscr{D}} \! \! e^{ip_ax^a}.
\end{equation*}
\textit{In each solution space, both phase signs occur equally.}\footnote{Indeed, in the rest frame of a particle with a positive phase we may have, for example,
\begin{equation*}
(i \slashed \partial - m) \left[ \begin{smallmatrix} 0 \\ 0 \\ 1 \\ 0 \end{smallmatrix} \right]_{\mathscr{D}} \! \! e^{iEt} = 
\text{\tiny{$\begin{bmatrix} 0 & 0 \\ 0 & E - m \end{bmatrix}$}} 
\left[ \begin{smallmatrix} 0 \\ 0 \\ 1 \\ 0 \end{smallmatrix} \right]_{\mathscr{D}} \! \! e^{iEt} = \left[ \begin{smallmatrix} 0 \\ 0 \\ E-m \\ 0 \end{smallmatrix} \right]_{\mathscr{D}} \! \! e^{iEt}.
\end{equation*}
This expression equals zero if and only if $E = m > 0$.} 
Consequently, the sign of the phase $e^{\pm ip_ax^a}$ of a Dirac spinor is mathematically unrelated to its representation as a particle or antiparticle.
Thus, the identification of phase sign with a particle/antiparticle state is at best interpretational, and it is not an interpretation we assume here.
}\end{Remark}

\subsection{Spinor chirality = electric charge} \label{chirality is charge section}

Recall that the choice of time orientation $o_0 \in \{ \pm 1\}$  along the worldline $\beta \subset \tilde{M}$ of a pointon is identified with the pointon's electric charge (though time does not flow along $\beta$ in spacetime $M$); see Section \ref{intro}.

\begin{Proposition} \label{keep}
A choice of time orientation $o_0 \in \{ \pm 1 \}$ of $\mathbb{R}^{1,3}$, and thus of electric charge of a pointon, corresponds to a unique $\mathfrak{su}(2)_{\mathbb{C}} := \mathfrak{su}(2) \oplus i \mathfrak{su}(2)$ subalgebra of the complexified Lorentz algebra,
\begin{equation} \label{decomp}
\mathfrak{so}(1,3)_{\mathbb{C}} \cong \mathfrak{su}(2)_{\mathbb{C}} \oplus \mathfrak{su}(2)_{\mathbb{C}}.
\end{equation}
\end{Proposition}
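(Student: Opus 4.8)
The plan is to realize the two summands in (\ref{decomp}) as the two chirality eigenspaces of $\gamma^5$, and then to observe that the sign of $\gamma^5$ is precisely the datum of an orientation of $\mathbb{R}^{1,3}$; once the spatial orientation is fixed by the spatial frame vectors of the pointon's rest tetrad, that orientation is exactly the time orientation $o_0$. Since the identification of $o_0$ with electric charge was already made in Section \ref{intro}, the remaining content is the \emph{canonical} bijection between time orientations and $\mathfrak{su}(2)_{\mathbb{C}}$-summands of (\ref{decomp}).

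First I would recall the decomposition explicitly. Let $J_1,J_2,J_3$ and $K_1,K_2,K_3$ be the rotation and boost generators of $\mathfrak{so}(1,3)$, so $[J_i,J_j]=\epsilon_{ijk}J_k$, $[J_i,K_j]=\epsilon_{ijk}K_k$, $[K_i,K_j]=-\epsilon_{ijk}J_k$. Setting $N^{\pm}_i:=\tfrac12(J_i\pm iK_i)$, one checks $[N^{\pm}_i,N^{\pm}_j]=\epsilon_{ijk}N^{\pm}_k$ and $[N^{+}_i,N^{-}_j]=0$, so $\mathfrak{n}^{\pm}:=\mathrm{span}_{\mathbb{C}}\{N^{\pm}_1,N^{\pm}_2,N^{\pm}_3\}$ are commuting ideals, each with $\mathrm{span}_{\mathbb{R}}\{N^{\pm}_i\}\cong\mathfrak{su}(2)$, hence $\mathfrak{n}^{\pm}\cong\mathfrak{su}(2)\oplus i\,\mathfrak{su}(2)=\mathfrak{su}(2)_{\mathbb{C}}$, and $\mathfrak{so}(1,3)_{\mathbb{C}}=\mathfrak{n}^{+}\oplus\mathfrak{n}^{-}$. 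As $\mathfrak{so}(1,3)_{\mathbb{C}}$ is semisimple, $\{\mathfrak{n}^{+},\mathfrak{n}^{-}\}$ is its unique pair of minimal ideals; thus the only freedom in a decomposition of the form (\ref{decomp}) is which ideal is listed first --- equivalently, such a decomposition is the same datum as a single distinguished ideal $\mathfrak{n}\in\{\mathfrak{n}^{+},\mathfrak{n}^{-}\}$, a $\mathbb{Z}/2$-torsor.

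Next I would identify the ideals by means of $\gamma^5$. Since $\gamma^5$ anticommutes with each $\gamma^{\mu}$, it commutes with the spinor Lorentz generators $S^{\mu\nu}=\tfrac14[\gamma^{\mu},\gamma^{\nu}]$, so the Weyl components $P_{\pm}\mathbb{C}^4=\ker(\gamma^5\mp\bm{1})$, with $P_{\pm}=\tfrac12(\bm{1}\pm\gamma^5)$, are Lorentz-invariant; complexifying, exactly one of $\mathfrak{n}^{+},\mathfrak{n}^{-}$ acts trivially on $P_{+}\mathbb{C}^4$ and the other trivially on $P_{-}\mathbb{C}^4$ (equivalently, under $\mathfrak{so}(1,3)\cong\Lambda^2(\mathbb{R}^{1,3})^{*}$ the two ideals are the $\pm i$-eigenspaces of the Hodge star $\star$ on complexified $2$-forms, which squares to $-\mathrm{id}$ in Lorentzian signature). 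The decisive point is that $\gamma^5=i\gamma^0\gamma^1\gamma^2\gamma^3$ --- equivalently $\star$ --- depends on the metric \emph{and an orientation} of $\mathbb{R}^{1,3}$, and reversing that orientation sends $\gamma^0\gamma^1\gamma^2\gamma^3\mapsto-\gamma^0\gamma^1\gamma^2\gamma^3$, hence $\gamma^5\mapsto-\gamma^5$, interchanging $P_{+}\mathbb{C}^4$ with $P_{-}\mathbb{C}^4$ and therefore $\mathfrak{n}^{+}$ with $\mathfrak{n}^{-}$. Hence a choice of distinguished ideal $\mathfrak{n}$ is the same as a choice of orientation of $\mathbb{R}^{1,3}$.

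Finally, fix the spatial orientation to be the standard one given by the ordered spatial frame vectors $e_1,e_2,e_3$ of the pointon's rest tetrad --- the same convention under which $\breve{v}^{abc}=o_0\,e_1\wedge e_2\wedge e_3$ in Section \ref{intro}. With the spatial orientation fixed, an orientation of $\mathbb{R}^{1,3}$ is the same as a time orientation, namely the orientation class of $(o_0e_0,e_1,e_2,e_3)$; this is precisely the datum $o_0\in\{\pm1\}$. Composing the bijections gives the asserted correspondence $o_0\leftrightarrow\mathfrak{n}$, which by Section \ref{intro} is a correspondence between the pointon's electric charge and a choice of $\mathfrak{su}(2)_{\mathbb{C}}$-summand in (\ref{decomp}). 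It remains only to verify naturality: two admissible rest tetrads differ by an element of $\mathrm{SO}^+(1,3)$, which preserves each ideal $\mathfrak{n}^{\pm}$ (they are ideals) and both the orientation and the time orientation of $\mathbb{R}^{1,3}$, so the correspondence does not depend on the auxiliary tetrad. I expect the only real subtlety --- and the step I would be most careful with --- to be this orientation bookkeeping: cleanly separating the roles of the time orientation, the fixed spatial orientation, and the overall orientation, and confirming that with the conventions of Section \ref{intro} in force the construction depends on $o_0$ alone. The Lie-theoretic input (the decomposition itself and the identification of its summands with the $\gamma^5$-eigenspaces) is entirely standard.
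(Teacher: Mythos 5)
Your argument is correct, but it takes a genuinely different route from the paper's. The paper proves the proposition entirely at the level of the real generators: it observes that reversing the time orientation leaves rotations unchanged and replaces each boost by its inverse, so $J_i \mapsto J_i$ and $K_i \mapsto -K_i$, which swaps $A_i = \tfrac12(J_i - iK_i)$ with $B_i = \tfrac12(J_i + iK_i)$ and hence exchanges the two $\mathfrak{su}(2)_{\mathbb{C}}$ summands, all with the ambient orientation $o_{0123}$ held fixed and regarded as nonphysical. You instead keep the two ideals fixed and move the labels: you identify $\mathfrak{n}^{\pm}$ with the $\pm i$-eigenspaces of the Hodge star (equivalently, through the chiral representation, with the $\gamma^5$-eigenspaces), note that $\star$ and $\gamma^5$ change sign under reversal of the total orientation, and then convert $o_0$ into a total orientation by declaring the spatial orientation of the rest triad fixed. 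What the paper's version buys is that it needs nothing beyond the commutation relations and it matches the paper's own bookkeeping, in which the total orientation is the arbitrary datum and the spatial orientation $o_{123} = o_{0123}o_0^{-1}$ is the induced one (see the proof of Lemma \ref{internal Hodge}); what yours buys is an explicit bridge to $\gamma^5$, so that the identification of $o_0 = \pm 1$ with $\rho^{\pm}$ and Corollary \ref{charge corollary} come essentially for free. The one point to watch in your version is the ``fixed spatial orientation'': a rest tetrad with $e_0 = v$ determines a spatial orientation class only after restricting to tetrads that are positive for the (nonphysical) ambient orientation, so the total orientation you attach to $o_0$ is really $o_0\,o_{\tilde{M}}$; your naturality check under $\operatorname{SO}^+(1,3)$ handles changes of tetrad but not this residual choice, and consequently your correspondence $o_0 \leftrightarrow \mathfrak{n}$, exactly like the paper's, is canonical only up to one overall conventional sign. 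That is a matter of convention rather than a gap: the physically meaningful content, that flipping $o_0$ exchanges the two summands of (\ref{decomp}), is established by both arguments.
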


\begin{proof}
For $i \in \{ 1,2,3 \}$, denote by $J_i, K_i \in \mathfrak{so}(1,3)$ the Lie algebra generators for rotations and boosts respectively in the direction $e_i$.
Recall that the linear combinations
\begin{equation*}
A_i := \tfrac 12 (J_i - iK_i) \ \ \ \text{ and } \ \ \ B_i := \tfrac 12 (J_i + i K_i)
\end{equation*}
generate the left and right $\mathfrak{su}(2)_{\mathbb{C}}$ subalgebras of the direct sum decomposition (\ref{decomp}).

Observe that under time orientation reversal $o_0 \mapsto -o_0$,
\begin{itemize} 
 \item[(i)] rotations $\exp (s a^iJ_i) \in L_0$ are invariant since they have no time dependence\footnote{Rotation here means rotation of reference frame, and should not to be confused with something rotating \textit{in time}.};
 \item[(ii)] boosts $\exp (s a^iK_i) \in L_0$ are replaced by their inverses:
\begin{equation*}
\exp (s K_i) \mapsto \exp (s K_i)^{-1} = \exp (-s K_i).
\end{equation*}
\end{itemize}
For example, consider a boost in the $x$-direction
\begin{equation*}
\Lambda = \left[ \begin{smallmatrix} \gamma & v \gamma & & \\ v \gamma & \gamma & & \\ && 1 & \\ &&& 1 \end{smallmatrix} \right],
\end{equation*}
where $\gamma = (1- v^2)^{-1/2}$.
Then $o_0 \mapsto -o_0$ yields $v \mapsto -v$ and $\gamma \mapsto \gamma$, whence 
\begin{equation*}
\Lambda \mapsto \Lambda' = \left[ \begin{smallmatrix} \gamma & -v \gamma & & \\ -v \gamma & \gamma & & \\ && 1 & \\ &&& 1 \end{smallmatrix} \right].
\end{equation*}
But $\Lambda \Lambda' = \Lambda' \Lambda = \mathbf{1}_4$, and so $\Lambda' = \Lambda^{-1}$.

Reversing the time orientation thus transforms the generators by
\begin{equation*}
J_i \mapsto J_i \ \ \ \ \text{ and } \ \ \ \ K_i \mapsto -K_i.
\end{equation*}
Therefore time orientation reversal swaps $A_i$ and $B_i$: $A_i \mapsto B_i \mapsto A_i$.
\end{proof}

Consider a spinor $\psi \in \mathbb{C}^4$ that transforms under a chiral representation of $\mathfrak{so}(1,3)$, that is, under one of the irreducible subrepresentations of $\rho$ in (\ref{rho rep}),
\begin{equation*}
\begin{array}{rrcl}
& \operatorname{Sl}(2,\mathbb{C}) & \longrightarrow & \operatorname{Gl}(4,\mathbb{C}) \\
\rho^-: & s & \mapsto & \left[ \begin{smallmatrix} s & 0 \\ 0 & 0 \end{smallmatrix} \right]_{\mathscr{C}} \\
\rho^+: & s & \mapsto & \left[ \begin{smallmatrix} 0 & 0 \\ 0 & s^{\dagger -1} \end{smallmatrix} \right]_{\mathscr{C}}
\end{array} 
\end{equation*} 
(These are the $(\tfrac 12, 0)$ and $(0, \tfrac 12)$ representations of $\operatorname{Sl}(2,\mathbb{C})$.)
The Lie algebra isomorphisms $\mathfrak{so}(1,3) \cong \mathfrak{sl}(2,\mathbb{C}) \cong \mathfrak{su}(2)_{\mathbb{C}}$, together with the relation $A_i^{\dagger -1} = A_i^* = B_i$, then imply that the time orientation $o_0 = \pm 1$ corresponds to the representation $\rho^{\pm}$, by Proposition \ref{keep}. 
Consequently, \textit{if $\psi$ represents a pointon on an internal spacetime $M$, then the chirality of $\psi$ is its electric charge.}

\begin{Corollary} \label{charge corollary}
If $\psi \in \mathbb{C}^4$ is an eigenspinor of $\gamma^5$ with eigenvalue $\pm 1$, then $\psi$ has electric charge $\pm 1$; otherwise $\psi$ is neutral.
In the latter case, the projections
\begin{equation*} \label{two projections}
\psi^- := \tfrac 12(1 - \gamma^5) \psi = \left[ \begin{smallmatrix} * \\ * \\ 0 \\ 0 \end{smallmatrix} \right]_{\mathscr{C}} \ \ \ \ \text{ and } \ \ \ \ \psi^+ := \tfrac 12 (1 + \gamma^5) \psi = \left[ \begin{smallmatrix} 0 \\ 0 \\ * \\ * \end{smallmatrix} \right]_{\mathscr{C}}
\end{equation*}
are spinors with charges $-1$ and $+1$, respectively.
\end{Corollary}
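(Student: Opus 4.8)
The plan is to read everything off from the chiral-basis form of $\gamma^5$ together with the identification ``chirality $=$ electric charge'' established in the paragraph immediately preceding the corollary. In the basis $\mathscr{C}$ one has $\gamma^5 = \left[ \begin{smallmatrix} -\bm{1} & \\ & \bm{1} \end{smallmatrix} \right]_{\mathscr{C}}$, so the $(-1)$-eigenspace of $\gamma^5$ is exactly the set of spinors $\left[ \begin{smallmatrix} * \\ * \\ 0 \\ 0 \end{smallmatrix} \right]_{\mathscr{C}}$ and the $(+1)$-eigenspace the set of spinors $\left[ \begin{smallmatrix} 0 \\ 0 \\ * \\ * \end{smallmatrix} \right]_{\mathscr{C}}$; these are precisely the $\rho$-invariant subspaces on which $\rho$ restricts to the chiral subrepresentations $\rho^-$ and $\rho^+$. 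By the discussion following Proposition \ref{keep} — via $\mathfrak{so}(1,3) \cong \mathfrak{sl}(2,\mathbb{C}) \cong \mathfrak{su}(2)_{\mathbb{C}}$ and the relation $A_i^{\dagger -1} = B_i$ — the representation $\rho^{\pm}$ corresponds to the time orientation $o_0 = \pm 1$, hence to electric charge $\pm 1$. Therefore a $\gamma^5$-eigenspinor with eigenvalue $\pm 1$ carries electric charge $\pm 1$, which is the first assertion.

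For the remaining assertions, since $(\gamma^5)^2 = \bm{1}$ the operators $P^{\mp} := \tfrac{1}{2}(1 \mp \gamma^5)$ are complementary idempotents ($P^{\mp}P^{\mp} = P^{\mp}$, $P^-P^+ = P^+P^- = 0$, $P^- + P^+ = 1$) whose images are the $(\mp1)$-eigenspaces of $\gamma^5$. Hence for any $\psi$ the vectors $\psi^- := P^-\psi$ and $\psi^+ := P^+\psi$ lie, when nonzero, in the charge $-1$ and charge $+1$ eigenspaces respectively, with $\psi = \psi^- + \psi^+$; evaluating in $\mathscr{C}$ gives the displayed column forms. Finally, if $\psi$ is not a $\gamma^5$-eigenspinor then both $\psi^-$ and $\psi^+$ are nonzero, so $\psi$ is a superposition of a charge $-1$ and a charge $+1$ pointon, and by additivity of electric charge its total charge is $-1 + 1 = 0$; that is, $\psi$ is neutral.

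The computational content — the eigenspace and projection bookkeeping — is routine; the step I expect to need the most care is the interpretive one, namely that a non-chiral spinor must be read as a charge-neutral \emph{composite} of a charge $-1$ and a charge $+1$ pointon rather than as carrying some intermediate value. This rests on the charge label $o_0$ being the discrete parameter $\pm1$ and on $\gamma^5$ acting with \emph{opposite} signs on the two chiral blocks, so that only the values $-1$, $0$, $+1$ can occur. I would also verify once more the sign match — that $\gamma^5$-eigenvalue $-1$ pairs with the $(\tfrac12,0)$ block $\rho^-$, hence with $o_0 = -1$ and charge $-1$ — which is immediate from the form of $\gamma^5$ in $\mathscr{C}$ and the definitions of $\rho^{\pm}$ given just above the corollary.
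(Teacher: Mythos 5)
Your proposal is correct and follows essentially the same route as the paper: the paper offers no separate proof, since the corollary is read off directly from the preceding identification (via Proposition \ref{keep} and the isomorphisms $\mathfrak{so}(1,3) \cong \mathfrak{sl}(2,\mathbb{C}) \cong \mathfrak{su}(2)_{\mathbb{C}}$) of the chiral blocks $\rho^{\pm}$, i.e.\ the $\gamma^5$ eigenspaces in the basis $\mathscr{C}$, with time orientation $o_0 = \pm 1$ and hence electric charge $\pm 1$. Your added projector bookkeeping with $\tfrac 12(1 \mp \gamma^5)$ and the additivity argument for neutrality of a non-eigenspinor are exactly the reading the paper intends (compare Remark \ref{observations} and the coupling interpretation of $\psi = \psi^- + \psi^+$ in Section \ref{Lagrangian section}).
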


\begin{Remark} \rm{
Time reversal in the framework of internal spacetime is the reversal of time orientation, $o_0 \mapsto -o_0$.
Thus, time reversal and charge conjugation cannot be separated into two distinct discrete symmetries: time reversal necessarily induces charge conjugation, and charge conjugation necessarily induces time reversal.
}\end{Remark}

\section{The spinor and internal $4$-momentum of a timelike pointon} \label{4-momentum section}

Fix an orientation of $\tilde{M}$.
Consider a pointon with worldline $\beta \subset \tilde{M}$, timelike $4$-velocity $v \in \tilde{M}_{\beta(t)}$, and (ontic) spin vector $s \in M_{\beta(t)}$.
Let $e_0, \ldots, e_3$ be a positive orthonormal tetrad parallel transported along $\beta$ for which $v = e_0$ and $s = o_0o_{12} e_3$, with $o_0, o_{12} \in \{ \pm 1 \}$ the free choices of orientation of $\operatorname{span}\{e_0\}$ and $\operatorname{span}\{e_1, e_2 \}$ respectively.

Since the pointon is a dust particle, it has a rest energy $\omega > 0$, or equivalently, $4$-momentum $k = \omega v$. 
We would like to identify $\omega$ with something geometric. 
Let $\beta(t) \in \tilde{M}$ be a point along $\beta$ that is not transversely intersected by other pointon worldlines, that is, a point where the internal tangent space $M_{\beta(t)}$ is $3$-dimensional.
Then
\begin{equation*}
M_{\beta(t)} = \operatorname{span}\{e_1, e_2, e_3\} \subset \tilde{M}_{\beta(t)}.
\end{equation*}
The internal $4$-velocity of the pointon at $\beta(t)$ is thus
\begin{equation*}
\breve{v}_{abc} = o_0 \star \! v_d = o_0 e^1 \wedge e^2 \wedge e^3.
\end{equation*}
Consequently, the \textit{internal $4$-momentum}  of the pointon is
\begin{equation*}
\breve{k}_{abc} = o_0 \star \! k_d = o_0 \star \! \omega v_d = \omega \breve{v}_{abc}.
\end{equation*}

Now consider the contraction of $\breve{k}_{abc}$ with the spin vector $s = o_0 o_{12} e_3$,
\begin{equation*} \label{sk contraction}
s^a \breve{k}_{abc} = o_0^2 o_{12} e^1 \wedge e^2 \wedge e^3 (e_3) = o_{12} \omega e^1 \wedge e^2.
\end{equation*}
Denote by $\hat{\star}$ the Hodge star operator in the internal tangent space $M_{\beta(t)}$.

\begin{Lemma} \label{internal Hodge}
The internal Hodge dual of the contraction $s^a \breve{k}_{abc}$ is the vector
\begin{equation} \label{contract}
\hat{\star} s^a \breve{k}_{abc} = \omega s,
\end{equation}
and therefore may be interpreted as an angular frequency vector in the rest frame of the pointon.
\end{Lemma}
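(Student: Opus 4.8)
The plan is to evaluate $\hat{\star}\, s^a\breve k_{abc}$ directly; the only real content is getting the orientation of the internal tangent space $M_{\beta(t)}$ right, since everything else is a one-line contraction. Throughout I would work in the positive orthonormal tetrad $e_0,\dots,e_3$ with $v=e_0$ and $s=o_0o_{12}e_3$, so that $M_{\beta(t)}=\operatorname{span}\{e_1,e_2,e_3\}$ and the internal $4$-momentum is $\breve k_{abc}=o_0\omega\, e^1\wedge e^2\wedge e^3$.

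The key observation I would make first is that $\breve k_{abc}=\omega\,\widehat{\vol}$, where $\widehat{\vol}$ is the internal volume form of $M_{\beta(t)}$ oriented so that the internal $4$-velocity $\breve v_{abc}=o_0\, e^1\wedge e^2\wedge e^3$ is positively oriented (this is forced, since $\breve v$ has unit magnitude, and it is exactly the orientation that makes $o_0$ meaningful). Consequently $s^a\breve k_{abc}=\omega\,\iota_s\widehat{\vol}$, which unwinds to the $2$-form $o_{12}\omega\, e^1\wedge e^2$ already computed above. Now I would invoke the standard identity in an oriented $3$-dimensional inner-product space: for a unit vector $s$, $\hat{\star}(\iota_s\widehat{\vol})=s$ (equivalently, $\hat{\star}(e^1\wedge e^2)=o_0e_3$ once the remaining index is raised with the internal metric, so that $\hat{\star}(o_{12}\omega\, e^1\wedge e^2)=o_0o_{12}\,\omega\, e_3=\omega s$). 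This gives (\ref{contract}). The interpretation clause is then immediate: $s$ is a unit spacelike vector along the spin axis in the rest frame $v=e_0$, and $\omega$ is the pointon's rest energy, which in natural units ($\hbar=1$) is its angular frequency by the Planck--Einstein relation $E=\hbar\omega$; hence $\omega s$ is the angular frequency of the pointon about its spin axis.

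The one delicate point — and the only place an error could creep in — is the sign bookkeeping: the factors $o_0^2=o_{12}^2=1$, the signature conventions for the induced metric on $M_{\beta(t)}$ and hence for raising the index after $\hat{\star}$, the $o_0$-dependence of the chosen internal orientation, and the fact that $\breve v$ is a pseudo-form rather than a genuine form must all combine so as to leave exactly $\omega s$ with no residual sign. Anchoring $\hat{\star}$ to the convention $\widehat{\vol}=\breve v$ and keeping it fixed throughout is what makes the computation go through; with the opposite orientation one would instead obtain $-\omega s$ when $o_0=-1$, which is precisely why the internal orientation must be tied to the time-orientation parameter $o_0$.
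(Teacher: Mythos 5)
Your proof is correct and follows essentially the same route as the paper: contract to get $o_{12}\omega\, e^1\wedge e^2$, then apply the internal Hodge star with the internal orientation tied to $o_0$. Your anchoring of that orientation by declaring $\widehat{\vol}=\breve v$ is equivalent to the paper's prescription $o_{123}=o_{0123}\,o_0^{-1}$ with the nonphysical choice $o_{0123}=1$, so the sign bookkeeping you flag works out exactly as in the paper's computation $\hat{\star}\,s^a\breve{k}_{abc}=o_{12}\omega\,(o_{123}\,e_1\times e_2)=\omega s$.
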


\begin{proof}
It suffices to set the (nonphysical) orientation $o_{\tilde{M}} = o_{0123}$ of the external tangent space $\tilde{M}_{\beta(t)}$ to $1$.
Then
\begin{equation*}
\hat{\star} s^a \breve{k}_{abc} = o_{12} \omega (o_{123}e_1 \times e_2) = o_{12}(o_{0123}o_{0}^{-1}) \omega e_3 = \omega s.
\end{equation*}
Indeed, the orientation $o_{123}$ of $M_{\beta(t)} = \operatorname{span}\{ e_1, e_2, e_3 \}$ arises from the Hodge star operator $\hat{\star}$ on $M_{\beta(t)}$.
(We may suppose that $o_{123} = 1$ and $o_{123} = -1$ correspond to the right-hand rule and left-hand rule respectively.)
Furthermore, with our choice $o_{0123} = 1$, we have $o_{123} = o_{0123}o_0^{-1} = o_{0123}o_0 = o_0$.
\end{proof}

It follows from (\ref{contract}) that the two free orientations $o_0$ and $o_{12}$ in the spin vector $s = o_0o_{12}e_3$ determine the direction of rotation of the angular frequency vector 
$\hat{\star} s^a \breve{k}_{abc}$, since $\omega$ is positive. 

The question, then, is what is rotating with angular frequency vector $\omega s$?
A simplest proposal is a particle with a circular trajectory about $\beta(t) = (t, 0, 0, 0)$,
\begin{equation} \label{spinon}
\alpha(t) = \alpha^{\mu} = (t, r \sin (\omega t), o_{12} r \cos (\omega t), 0)
\end{equation}
for some choice of radius $r > 0$, in the normal coordinates induced from $e_0, \ldots, e_3$.

Recall the gamma map $w \mapsto \gamma(w) = \slashed w$ in (\ref{gamma matrix map}) and the internal metric $h$ in (\ref{im}).

\begin{Definition} \rm{
We call the particle $\alpha(t)$ in (\ref{spinon}) the \textit{spinor particle} of the pointon, and define the pointon's spinor $\psi(\beta(t)) \in \mathbb{C}^4$ to be one of the four columns of the gamma matrix of the internal metric projection $h$ of the $4$-velocity $\dot{\alpha}^{\mu} := \tfrac{d}{d\tau}\alpha^{\mu}$,
\begin{equation*}
\gamma(h(\dot{\alpha})) = \gamma^{\mu}h_{\mu \nu} \dot{\alpha}^{\nu}.
\end{equation*}
}\end{Definition}

\begin{Lemma} \label{ildk}
The spinors of a charged pointon with spin $o_{12} \in \{ +1, -1 \} = \{ \uparrow, \downarrow \}$ in the $e_3$ direction in its rest frame are
\begin{equation} \label{spinor reps}
u \! \left[ \begin{smallmatrix} 1 \\ 0 \\ 0 \\ 0 \end{smallmatrix} \right]_{\mathscr{C}} e^{o_{12} i \omega t}, \ \ \ 
u \! \left[ \begin{smallmatrix} 0 \\ 1 \\ 0 \\ 0 \end{smallmatrix} \right]_{\mathscr{C}} e^{-o_{12} i \omega t}, \ \ \ 
u \! \left[ \begin{smallmatrix} 0 \\ 0 \\ -1 \\ 0 \end{smallmatrix} \right]_{\mathscr{C}} e^{-o_{12} i \omega t}, \ \ \ 
u \! \left[ \begin{smallmatrix} 0 \\ 0 \\ 0 \\ -1 \end{smallmatrix} \right]_{\mathscr{C}} e^{o_{12} i \omega t}.
\end{equation}
\end{Lemma}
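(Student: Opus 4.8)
The plan is a direct computation of the matrix $\gamma(h(\dot\alpha))$ and a reading-off of its four columns. First I would differentiate the trajectory $(\ref{spinon})$ to get $\tfrac{d}{dt}\alpha^\mu = (1,\,r\omega\cos\omega t,\,-o_{12}r\omega\sin\omega t,\,0)$, note that its spatial speed is the constant $r\omega$ so that $\tfrac{d\tau}{dt}=\sqrt{1-r^2\omega^2}$, and hence
\begin{equation*}
\dot\alpha^\mu \;=\; \tfrac{1}{\sqrt{1-r^2\omega^2}}\bigl(1,\;r\omega\cos\omega t,\;-o_{12}r\omega\sin\omega t,\;0\bigr).
\end{equation*}
Since $\beta(t)$ is a point with $\dim M_{\beta(t)}=3$ and $\ker h = \operatorname{span}\{v\} = \operatorname{span}\{e_0\}$, the internal metric $h$ acting on $\tilde M_{\beta(t)}$ (using the normal coordinates induced by $e_0,\dots,e_3$ to identify the tangent space at $\alpha(t)$ with $\tilde M_{\beta(t)}$) merely deletes the $e_0$-component, so $h(\dot\alpha)^\mu = u\,(0,\cos\omega t,-o_{12}\sin\omega t,0)$, where $u := r\omega/\sqrt{1-r^2\omega^2}$ is the new speed/radius parameter.

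Next I would assemble $\gamma(h(\dot\alpha)) = \gamma^\mu h_{\mu\nu}\dot\alpha^\nu$ in the chiral basis: only $\gamma_1 = -\gamma^1$ and $\gamma_2 = -\gamma^2$ contribute, and the resulting $4\times4$ matrix has vanishing diagonal $2\times2$ blocks and off-diagonal blocks equal to $\pm u(\cos\omega t\,\sigma^1 - o_{12}\sin\omega t\,\sigma^2)$. The one genuine simplification is that, because $o_{12}^2 = 1$ and cosine is even while sine is odd, the off-diagonal entries satisfy $\cos\omega t \pm i\,o_{12}\sin\omega t = e^{\pm i o_{12}\omega t}$; thus every column of $\gamma(h(\dot\alpha))$ equals $u$ times a standard basis vector of $\mathbb{C}^4$ times one of the phases $e^{\pm i o_{12}\omega t}$. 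Reading the four columns off, and absorbing an overall sign into $u$, then produces exactly the list $(\ref{spinor reps})$.

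Finally, I would check the pairing of phases with components against the charge assignment: the two columns supported in the $\rho^-$ (upper) block — the spinors available to a negatively charged pointon — are the first two entries of $(\ref{spinor reps})$, and since positive charge corresponds to the reversed time orientation $o_0\mapsto -o_0$, hence to the reversed circulation sense of the spinor particle (equivalently $t\mapsto -t$ in $(\ref{spinon})$), the two columns supported in the $\rho^+$ (lower) block acquire the conjugate phases and are the last two entries of $(\ref{spinor reps})$, consistent with Corollary~\ref{charge corollary}. The only real obstacle is sign-and-ordering bookkeeping under the $(+,-,-,-)$ signature — the identity $\gamma_\mu = g_{\mu\nu}\gamma^\nu$, the explicit form of $\sigma^2$, which $2\times2$ block of $\gamma^j$ is which, and the labeling of the four columns — so that the phases $e^{\pm i o_{12}\omega t}$ attach to the correct components; everything else follows at once from Euler's formula together with the fact that $h$ annihilates $v = e_0$.
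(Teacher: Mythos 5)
Your proof follows the same route as the paper's: differentiate the trajectory (\ref{spinon}), let $h$ delete the $e_0$-component, assemble the chiral-basis matrix $\gamma(h(\dot{\alpha}))$ with off-diagonal blocks $\pm u(\cos(\omega t)\,\sigma^1 - o_{12}\sin(\omega t)\,\sigma^2)$, apply Euler's formula, and read off the columns; that is exactly the paper's computation. Two bookkeeping points differ. First, the paper takes the parameter $t$ of $\alpha$ to be the proper time of the central worldline $\beta$, so $\dot{\alpha} = e_0 + u\cos(\omega t)e_1 - o_{12}u\sin(\omega t)e_2$ with $u = \omega r$ and no Lorentz factor; your rescaling by $d\tau/dt = \sqrt{1-r^2\omega^2}$ (the proper time of $\alpha$ itself) replaces $u$ by $\omega r/\sqrt{1-\omega^2 r^2}$, which tacitly requires $\omega r<1$ and clashes with the later identifications $u=\omega r$, $m=r^{-1}$, $E_0=mu$ and the on-shell criterion $u=1$ in Section \ref{spin vector section}; within the lemma this only changes the overall scalar, but it is not the paper's $u$. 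Second, nothing needs to be absorbed into $u$: the minus signs on the last two spinors of (\ref{spinor reps}) are precisely the relative sign produced by $h(\dot{\alpha})^* = -h(\dot{\alpha})_*$, and the paper deliberately retains them, permitting only a common rescaling of all four columns. Finally, your closing step assigning the conjugate phases to the lower-block (positive-charge) columns goes beyond what the paper's written proof records (it stops at the matrix), but your reading agrees with the paper's subsequent labeling, e.g.\ $[0\!\uparrow]$ in Proposition \ref{ngu} and the remark that spinor particles of opposite charge and equal spin rotate in opposite directions, so it is a harmless and indeed clarifying addition.
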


\begin{proof}
We have 
\begin{equation*}
\dot{\alpha} = e_0 + u \cos (\omega t) e_1 - o_{12} u \sin(\omega t) e_2,
\end{equation*} 
by (\ref{spinon}), noting that $t$ is the proper time parameter $\tau$ of $\beta$.
Thus 
\begin{equation*}
h(\dot{\alpha}) =  \tensor{h}{^{\mu}_{\nu}}\dot{\alpha}^{\nu} = u \cos (\omega t)e_1 - o_{12} u \sin (\omega t)e_2,
\end{equation*} 
by (\ref{im}).
Consequently, by (\ref{v maps}), 
\begin{equation*}
\begin{split}
h(\dot{\alpha})_* & = u \! \left[ \begin{matrix} 0 & \cos (\omega t) + o_{12} i \sin (\omega t) \\ \cos (\omega t) - o_{12} i \sin (\omega t) & 0 \end{matrix}\right] \\
& = u \! \left[ \begin{matrix} 0 & e^{o_{12} i \omega t} \\ e^{-o_{12} i \omega t} & 0 \end{matrix} \right].
\end{split}
\end{equation*}
Furthermore, observe that $h(\dot{\alpha})^* = -h(\dot{\alpha})_*$.
Therefore, by (\ref{gamma matrix map}),
\begin{equation*}
\gamma(h(\dot{\alpha})) = \left[ \begin{matrix} 0 & h(\dot{\alpha})_* \\ h(\dot{\alpha})^* & 0 \end{matrix} \right]_{\mathscr{C}} 
= u \! \left[ \begin{smallmatrix} &&& e^{o_{12} i \omega t} \\ && e^{-o_{12} i \omega t} & \\ & -e^{o_{12} i \omega t} && \\ -e^{-o_{12} i \omega t} &&& \end{smallmatrix} \right]_{\mathscr{C}}.
\end{equation*}
\end{proof}

We are free to rescale the spinors (\ref{spinor reps}) by a nonzero complex number \textit{as long as all the spinors (\ref{spinor reps}) are rescaled by the same number}.
Consequently, we do not remove the minus signs from the latter two. 
In particular, we do not normalize each spinor separately. 

\begin{Definition} \rm{
We say a set of spinors $\mathcal{O}$ is \textit{covariantly independent} if for any two spinors $\psi_1, \psi_2 \in \mathcal{O}$, we have $\psi_1 = \psi_2$ whenever there is a $c \in \mathbb{C}$ and $\Lambda \in L_0$ such that $\psi_1 = c \Lambda \psi_2$.
}\end{Definition}

\begin{Proposition} \label{ngu}
The spinors
\begin{equation*} \label{max set}
\left[ \begin{smallmatrix} e^{i \omega t} \\ 0 \\ 0 \\ 0 \end{smallmatrix} \right]_{\mathscr{C}} =: [ \uparrow \! 0], \ \ \ 
\left[ \begin{smallmatrix} e^{-i \omega t} \\ 0 \\ 0 \\ 0 \end{smallmatrix} \right]_{\mathscr{C}} =: [ \downarrow \! 0], \ \ \ 
\left[ \begin{smallmatrix} 0 \\ 0 \\ -e^{-i \omega t} \\ 0 \end{smallmatrix} \right]_{\mathscr{C}} =: [ 0 \! \uparrow ], \ \ \ 
\left[ \begin{smallmatrix} 0 \\ 0 \\ -e^{i \omega t} \\ 0 \end{smallmatrix} \right]_{\mathscr{C}} =: [ 0 \! \downarrow ]
\end{equation*}
form a maximal covariantly independent set of pointon $\gamma^5$ eigenspinors.
\end{Proposition}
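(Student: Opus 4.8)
The plan is to separate the four spinors using two covariant invariants, and then to show that every pointon $\gamma^5$ eigenspinor is covariantly equivalent to one of them. Throughout I will use Lemma \ref{ildk}: a charged pointon in its rest frame with spin $o_{12}\in\{\uparrow,\downarrow\}$ has spinor equal to one of the four spinors displayed there (after dividing by the common factor $u$, which I may assume nonzero, the case $u=0$ giving the zero spinor), and each of those is a $\gamma^5$ eigenspinor. Since every timelike pointon acquires such a rest frame after a transformation in $L_0$, it suffices to establish the statement for the eight rest-frame spinors obtained by letting $o_{12}$ range over $\{\uparrow,\downarrow\}$.

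First invariant: the $\gamma^5$-eigenvalue, equivalently by Corollary \ref{charge corollary} the electric charge. As $\rho(s)=\left[\begin{smallmatrix}s&\\&s^{\dagger-1}\end{smallmatrix}\right]_{\mathscr{C}}$ is block diagonal it commutes with $\gamma^5=\left[\begin{smallmatrix}-\bm{1}&\\&\bm{1}\end{smallmatrix}\right]_{\mathscr{C}}$, and scaling by $c\in\mathbb{C}$ preserves $\gamma^5$-eigenspaces; hence the $\gamma^5$-eigenvalue is constant on covariant-equivalence classes. It equals $-1$ on $[\uparrow\!0],[\downarrow\!0]$ and $+1$ on $[0\!\uparrow],[0\!\downarrow]$. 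Second invariant: the sign of the phase. Each pointon spinor has the form $ue^{\mp i p_a x^a}$ times a constant column with $p=(\omega,0,0,0)$ future-timelike; the action $\psi(x)\mapsto\rho(s)\psi(\Lambda_s x)$ replaces $p$ by its Lorentz transform, again future-timelike because $L_0$ preserves time orientation, while multiplication by a constant $c$ cannot convert $e^{-ip_ax^a}$ into $e^{+iq_ax^a}$. So the phase sign is also constant on covariant-equivalence classes; it is $+$ on $[\uparrow\!0],[0\!\downarrow]$ and $-$ on $[\downarrow\!0],[0\!\uparrow]$. The pair (eigenvalue, phase sign) therefore takes four distinct values on the four spinors, so no two of them are covariantly related, which is covariant independence.

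For maximality I will check that each of the eight rest-frame spinors is covariantly equivalent to one of the four. Two rest-frame spinors with the same $\gamma^5$-eigenvalue are supported in the same $\mathbb{C}^2$-summand of the chiral decomposition (the top summand for eigenvalue $-1$, the bottom for $+1$) and differ only by a nonzero vector of that $\mathbb{C}^2$. The rotation subgroup of $L_0$ acts through $\rho$ as $\operatorname{SU}(2)$ (since $s^{\dagger-1}=s$ for $s\in\operatorname{SU}(2)$), which is transitive on nonzero vectors of $\mathbb{C}^2$ up to scalar, and a spatial rotation fixes $x^0$ and so leaves the phase untouched; hence any two such spinors are covariantly equivalent. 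Sorting the eight rest-frame spinors by (eigenvalue, phase sign) then shows each of the four classes above is realized and that $\{[\uparrow\!0],[\downarrow\!0],[0\!\uparrow],[0\!\downarrow]\}$ is a transversal for them; together with the preceding paragraph this is exactly the claim.

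The step I expect to need the most care is the phase-sign invariance: one must argue against an arbitrary scalar $c\in\mathbb{C}$ and an arbitrary $\Lambda\in L_0$ acting together, showing that neither the time-orientation-preserving Lorentz action nor the rescaling can flip a positive-frequency exponential to a negative-frequency one — which is precisely where restricting to $L_0$ rather than the full Lorentz group matters, cf.\ Remark \ref{sign of phase}. A minor loose end is the degenerate case $u=0$, which yields the zero spinor and is not a pointon spinor.
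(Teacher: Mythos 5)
Your proof is correct and, at its core, follows the same route as the paper's: reduce to the rest-frame column spinors of Lemma \ref{ildk} and identify the redundant second and fourth columns with the listed first and third ones by a spatial rotation acting block-diagonally through $\rho$. The difference is one of explicitness: the paper's proof simply exhibits the single rotation $\Lambda = \operatorname{diag}(1,1,-1,-1) = \Lambda_{\pm i \sigma^1}$, which acts as $\rho(i\sigma^1)$ and pairs each of the four listed spinors with the remaining column of the same phase, and it leaves the covariant independence of the four spinors implicit; you instead get maximality from transitivity of the $\operatorname{SU}(2)$ rotation subgroup on each chiral $\mathbb{C}^2$ summand up to scalar (legitimate, since $\rho(s)=\operatorname{diag}(s,s)$ for unitary $s$ and rotations fix $x^0$), and you make independence explicit via the two covariant invariants: the $\gamma^5$ eigenvalue, preserved because $\rho(s)$ commutes with $\gamma^5$ and scalars preserve eigenspaces, and the frequency sign, protected by orthochronicity of $L_0$ together with the observation that no constant $c$ can flip $e^{-ip_ax^a}$ into $e^{+iq_ax^a}$ for future-timelike $p,q$. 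Your version is therefore somewhat more complete than the printed proof, and your flagged loose ends (the phase-sign invariance and the degenerate case $u=0$) are handled correctly; nothing in your argument conflicts with the paper's.
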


\begin{proof}
Recall the maps (\ref{sl to l}) and (\ref{rho rep}).
The Lorentz transformation $\Lambda = \left[ \begin{smallmatrix} 1 & & & \\ & 1 & & \\ & & -1 & \\ & & & -1 \end{smallmatrix} \right]$ acts on spinors in both the Dirac and chiral representations by\footnote{Note that the factor $i = e^{i \pi/2}$ in (\ref{dfak}) simply rotates the phase of the spinor by $\pi/2$.}
\begin{equation} \label{dfak}
\rho(i \sigma^1) = i \left[ \begin{smallmatrix} \sigma^1 & \\ & \sigma^1 \end{smallmatrix} \right],
\end{equation}
since $\Lambda = \Lambda_{\pm i \sigma^1}$.
Equivalently, a Lorentz transformation $\exp \left(- \tfrac i2 \omega_{\mu \nu} J^{\mu \nu} \right) \in L_0$, where $\omega_{\mu \nu}$ is antisymmetric and $J^{\mu \nu}$ are the six generators of the Lorentz Lie algebra, acts on spinors by $\exp \left( - \tfrac i2 \omega_{\mu \nu} S^{\mu \nu} \right)$, where $S^{\mu \nu} := \tfrac i4 [ \gamma^{\mu},\gamma^{\nu} ]$.
Whence, $\exp \left(- i \omega J^{23} \right) \in L_0$ acts on spinors by 
\begin{equation*}
\exp \left( - i \omega S^{23} \right) = \exp \left(- \tfrac i2 \omega \left[ \begin{smallmatrix} \sigma^1 & \\ & \sigma^1 \end{smallmatrix} \right] \right)
= \left[ \begin{smallmatrix} \cos (\omega/2) & -i \sin (\omega/2) & & \\ -i \sin (\omega/2) & \cos (\omega/2) && \\ && \cos (\omega/2) & -i \sin (\omega/2) \\ & & -i \sin (\omega/2) & \cos (\omega/2) \end{smallmatrix} \right].
\end{equation*}
The matrix (\ref{dfak}) is then obtained by letting $\omega = -\pi$.

Consequently, the four pairs of spinors
\begin{equation*}
\left[ \begin{smallmatrix} e^{\pm i \omega t} \\ 0 \\ 0 \\ 0 \end{smallmatrix} \right]_{\mathscr{C}}, \ \ \ \left[ \begin{smallmatrix} 0 \\ e^{\pm i \omega t} \\ 0 \\ 0 \end{smallmatrix} \right]_{\mathscr{C}}; \ \ \ \ \text{ and } \ \ \ \ 
\left[ \begin{smallmatrix} 0 \\ 0 \\ -e^{\pm i \omega t} \\ 0 \end{smallmatrix} \right]_{\mathscr{C}}, \ \ \ \left[ \begin{smallmatrix} 0 \\ 0 \\ 0 \\ -e^{\pm i \omega t} \end{smallmatrix} \right]_{\mathscr{C}}
\end{equation*}
are each related by $\Lambda$.
\end{proof}

A maximal covariantly independent set of charged pointon spinors is therefore
\begin{equation*} \label{O0}
\mathcal{O}_{5} := \left\{ [a 0], [0a] \, | \, a \in \{ \uparrow, \downarrow\} \right\}.
\end{equation*}
A spinor in $\mathcal{O}_5$ is characterized by precisely two properties: 
\begin{itemize}
 \item the spin $\uparrow$ or $\downarrow$ of its pointon, specified by the sign $\pm$ of its phase $e^{\pm i \omega t}$; and
 \item the electric charge $-1$ or $+1$ of its pointon, specified respectively by 
\begin{equation*}
\left[ \begin{smallmatrix} 1 \\ 0 \\ 0 \\ 0 \end{smallmatrix} \right]_{\mathscr{C}} \ \ \ \text{ or } \ \ \ \left[ \begin{smallmatrix} 0 \\ 0 \\ -1 \\ 0 \end{smallmatrix} \right]_{\mathscr{C}}.
\end{equation*}
\end{itemize}

We conclude with a remark about left-hand and right-hand rules.
Recall from Lemma \ref{internal Hodge} that the spin vector $s$ in the $e_3$ direction is given by $s = o_0o_{12}e_3$.
Thus, the direction of rotation is determined by the product of the orientations $o_0$ and $o_{12}$. 
The timelike orientation $o_0$ (or equivalently, electric charge) induces an orientation $o_{123} = o_{0123}o_0^{-1}$ of the spatial subspace $\operatorname{span}\{e_1, e_2, e_3 \} \subset \tilde{M}_{\beta(t)}$, where the orientation $o_{0123}$ of $\tilde{M}_{\beta(t)}$ is arbitrary and nonphysical.
Therefore, if $o_0 = 1$ (resp.\ $o_0 = -1$) and $o_{0123} = 1$, then the right-hand (resp.\ left-hand) rule determines the direction of rotation of the spinor particle: the thumb points in the direction of the spin vector $s$, and the fingers curl in the direction of rotation.
Consequently, \textit{spinor particles of pointons of opposite charge and equal (resp.\ opposite) spin rotate in opposite (resp.\ equal) directions.}
This is shown in Figure \ref{spin imped}.

\section{Spinors and the mass shell condition} \label{spin vector section}

Consider a pointon with timelike worldline $\beta \subset \tilde{M}$ and spinor particle $\alpha(t)$ given in (\ref{spinon}).
In the following, we consider the physical role of the fixed radius $r > 0$ of the spinor particle $\alpha(t)$.

Denote by $u = \omega r$ the tangential speed of the spinor particle in the rest frame of the pointon.
The existence of $r$ determines a special frequency $\omega = \omega_*$, namely, the frequency for which $u$ is lightlike, $u = 1$:\footnote{Lightlike tangential speed may be viewed as a geodesic-like property: suppose a circle of radius $r$ is rotating with speed $u$ measured in an inertial frame.
In the accelerated frame of the circle, Ehrenfest observed that the circumference is
\begin{equation*}
C = 2\pi r \left( 1 - u^2 \right)^{-1/2} = 2 \pi r \gamma(u).
\end{equation*}
Thus, if $u = 1$, then $C$ is infinite. 
If a circle of infinite circumference is regarded as a straight line, then the particle travels in a `straight line' in its own reference frame if and only if it travels at light speed $u = 1$.}
\begin{equation*}
\omega_* = r^{-1}.
\end{equation*}

\begin{Definition} \label{mass def} \rm{
We identify the special frequency $\omega_*$ with the \textit{mass} $m$ of the pointon,
\begin{equation*} \label{m = r inverse}
m := \omega_* = r^{-1} = \hbar/(cr),
\end{equation*}
where units have been restored on the right.
}\end{Definition}

\begin{Remark} \rm{
In contrast to the standard definition of mass, we have \textit{not} defined the mass $m$ of a pointon to be the length of its $4$-momentum $k = \omega v$, and thus its rest energy $E_0 = \omega$.
The reason for this departure is so that we can describe off-shell particles \textit{classically}; indeed, the aim of internal spacetime is to be an approximate model of quantum theory using degenerate metrics. 
Recall that a particle is said to be off shell if $k^2 = k^ak_a \not = m^2$.
Such particles are purely quantum effects and occur at every trivalent electron-photon vertex in Feynman diagrams.\footnote{To see this, consider an electron-photon vertex with electron $4$-momenta $k_1, k_2$, and photon $4$-momentum $k_3 \not = 0$.
Assume to the contrary that all three particles are on shell: $k_1^2 = k_2^2 = m_e^2$ and $k_3^2 = 0$. By $4$-momentum conservation it suffices to suppose $k_1 + k_2 = k_3$, whence
\begin{equation*}
0 = k_3^2 = (k_1 + k_2)^2 = k_1^2 + k_2^2 + 2k_1^a k_{2a} = 2m_e^2 + 2k_1^a k_{2a}.
\end{equation*}
This implies $k_1^a k_{2a} = -m_e^2$, so $k_1 = -k_2$. 
But then $k_3 = 0$, contrary to assumption.} 
} \end{Remark}

The relations $m = r^{-1}$, $E_0 = \omega$, and $u = r \omega$, together imply
\begin{equation*} \label{E = mcu}
E_0 = \omega = ur^{-1} = mu = mcu.
\end{equation*}
This reduces to Einstein's mass-energy relation $E_0 = mc^2 = m$ whenever $u = c = 1$.
Thus, the $4$-momentum $k$ of any timelike pointon satisfies
\begin{equation*} \label{k squared}
k^2 = k^a k_a = E_0^2 = \omega^2 = m^2u^2,
\end{equation*}
whereas the relation
\begin{equation*}
k^2 = m^2
\end{equation*}
only holds if the speed $u$ of a pointon's spinor particle is lightlike, $u = 1$.
\textit{We therefore obtain a description of off-shell particles for which relativity is never violated:} 
\begin{itemize}
 \item A pointon is on shell, $k^2 = m^2$, if and only if $u = 1$.
 \item Regardless of whether a pointon is on or off shell, the relativistic constraint $k^2 = E_0^2$ always holds.
\end{itemize}

We review three derivations of $E_0 = mc^2$ in the Appendix, and show that no inconsistencies arise between these derivations and our new relation $E_0 = mcu$.
Indeed, each derivation assumes the particles are on shell, whence $u = c$. 

Finally, the relation $E_0 = mu$ (or equivalently, $k^a k_a - m^2u^2 = 0$) modifies the Dirac Lagrangian $\mathcal{L}_{\mathscr{D}} = \bar{\psi}(i \slashed \partial - m) \psi$ by replacing $m$ with $mu = r^{-1}u = \omega$:
\begin{equation*}
\mathcal{L} = \bar{\psi}(i \slashed \partial - mu) \psi = \bar{\psi}(i \slashed \partial - \omega) \psi.
\end{equation*}
By virtue of $u$ being a free parameter, off-shell spinors are thus solutions to the \textit{classical equations of motion} of $\mathcal{L}$, namely $(i \slashed \partial - mu) \psi = 0$.

\section{The Dirac Lagrangian on an internal spacetime} \label{Lagrangian section}

The pointon spinors $\psi \in \mathcal{O}_5 := \{ {[a0]}, {[0a]} \, | \, a \in \{ \uparrow, \downarrow \} \}$, given in Proposition \ref{ngu}, satisfy the Klein-Gordon equation, $(\partial^2 + \omega^2)\psi = (-i \slashed \partial - \omega)(i \slashed \partial - \omega) \psi = 0$.
Thus, the Klein-Gordon equation 
arises in our framework from the gamma representation of Minkowski spacetime (\ref{gamma matrix map}), together with the internal metric $h$.
In particular, we obtain the Klein-Gordon equation without assuming canonical quantization.

It was shown in \cite{B6} that pointons are spin-$\tfrac 12$ fermions.
However, it is readily verified that the pointon spinors $\psi \in \mathcal{O}_5$ do not satisfy the Dirac equation, $(i \slashed \partial - \omega) \psi \not = 0$. 
The fundamental meaning of the Dirac equation is therefore different in the context of internal spacetime.
In this section we introduce a Dirac-like Lagrangian for pointons and establish the meaning of the Dirac equation on internal spacetime.

Let $U \subset \tilde{M}$ be a locally flat open set.
Consider the subspace of the $\mathbb{C}$-vector space of pointon spinor fields $\psi: U \to \mathbb{C}^4$,
\begin{equation*} 
\left\langle \mathcal{O}_5 \right\rangle := \left\{ \sum_j c_j \Lambda_j \psi_j \, | \, c_j \in \mathbb{C}, \ \Lambda_j \in L_0, \ \psi_j \in \mathcal{O}_5 \right\}.
\end{equation*}
Let $\hat{\omega}: \left\langle \mathcal{O}_5 \right\rangle \to \left\langle \mathcal{O}_5 \right\rangle$ be the $\mathbb{C}$-linear map defined on pointon spinors $\psi \in \mathcal{O}_5$ with angular frequency $\omega = r^{-1}u = mu \geq 0$ and orientation $o_{12} \in \{ \pm 1 \}$ by 
\begin{equation*}
\hat{\omega} \psi = o_{12} \omega \psi,
\end{equation*}
and extend by $L_0$ to all pointon spinors: for $\psi \in \mathcal{O}_5$ and $\Lambda \in L_0$, set
\begin{equation} \label{nguxinfty}
\hat{\omega}(\Lambda \psi) := \Lambda \hat{\omega}(\psi).
\end{equation}
This operator is well-defined by Proposition \ref{ngu}.
Consider the Dirac-like operator
\begin{equation*}
\delta := i \slashed \partial - \hat{\omega}: \left\langle \mathcal{O}_5 \right\rangle \to \left\langle \mathcal{O}_5 \right\rangle.
\end{equation*}
We define the pointon Lagrangian density to be
\begin{equation*}
\mathcal{L} := \bar{\psi} \delta \psi = \bar{\psi}(i \slashed \partial - \hat{\omega}) \psi,
\end{equation*}
with $\bar{\psi} := \psi^{\dagger} \gamma^0$.
Similar to the usual Dirac Lagrangian, the two equations of motion of $\mathcal{L}$ are both $\delta \psi = 0$.
These equations are Lorentz invariant by (\ref{nguxinfty}):
\begin{equation*}
\delta (\Lambda \psi) = (\delta \Lambda) \psi = (\Lambda \delta ) \psi = \Lambda( \delta \psi ) = 0.
\end{equation*}

Recall from Corollary \ref{charge corollary} that a generic spinor $\psi: U \to \mathbb{C}^4$ decomposes into $\gamma^5$ eigenspinors $\psi = \psi^- + \psi^+$ of negative and positive electric charges, and consider the chiral decomposition of the Dirac Lagrangian, 
\begin{equation*} \label{chiral decomposition}
\mathcal{L} = \bar{\psi}(i \slashed \partial - \hat{\omega}) \psi = i \bar{\psi}^- \slashed \partial \psi^- - \bar{\psi}^- \hat{\omega} \psi^+ + (+ \leftrightarrow -).
\end{equation*}
Since $\gamma^5$ eigenvalues correspond to electric charge on an internal spacetime, the two mass terms $\bar{\psi}^{\mp} \hat{\omega} \psi^{\pm}$ represent
\begin{itemize}
 \item[(\textsc{a})] couplings between pointons of opposite charge; and
 \item[(\textsc{b})] pair creation/annihilation of pointons of opposite charge.
\end{itemize}
We will find that these two new interpretations allow us to construct a composite model of the standard model particles.

\vspace*{.25cm}

\textbf{(\textsc{a}) Couplings between pointons of opposite charge.}

Suppose $\psi^-$ and $\psi^+$ are pointon spinors of opposite charge, equal angular frequency $\omega$, and coincident worldlines $\beta$ in $\tilde{M}$.
Excitations of the mass terms $\bar{\psi}^{\pm} \hat{\omega} \psi^{\mp}$ of the Dirac Lagrangian $\mathcal{L} = \bar{\psi}\delta \psi$ then represent a coupling between the two pointons. 
The equations of motion are obtained by varying the respective fields $\bar{\psi}^{\pm}$ (or equivalently, $\psi^{\pm}$):
\begin{equation} \label{apex equations}
i \slashed \partial \psi^+ = \hat{\omega} \psi^- \ \ \ \ \text{ and } \ \ \ \ i \slashed \partial \psi^- = \hat{\omega} \psi^+.
\end{equation}
These equations hold at points $\beta(t) \in \tilde{M}$ where the spinor particles of the two pointons intersect in $M_{\beta(t)}$.

Denote by $[\psi^-]$ (resp.\ $[\psi^+]$) the upper (resp.\ lower) two components of $\psi^-$ (resp.\ $\psi^+$).
In the pointon's rest frame, the equations of motion (\ref{apex equations}) simplify to
\begin{equation*} \label{ngu2}
\tfrac 1c \partial_0 [\psi^+] = \pm \tfrac ir  [\psi^-] \ \ \ \ \text{ and } \ \ \ \ \tfrac 1c \partial_0 [\psi^-] = \pm \tfrac ir [\psi^+].
\end{equation*}
These are remarkably similar to the Faraday and (source-free) Amp\'ere equations,
\begin{equation*}
\tfrac 1c \partial_0 \bm{B} = - \nabla \! \times \! \bm{E} \ \ \ \ \text{ and } \ \ \ \ \tfrac 1c \partial_0 \bm{E} = \nabla \! \times \! \bm{B},
\end{equation*}
\textit{since $[\psi^-]$ and $[\psi^+]$ specify the rotation of a spinor particle.}
Just as Maxwell's equations describe the coupling of the electric and magnetic fields, \textit{the Dirac equation 
\begin{equation} \label{coupling equation}
\delta(\psi^- + \psi^+)(p) = 0
\end{equation}
describes the coupling of pointons of opposite charge.}
A solution to the Dirac equation on internal spacetime therefore represents a bound state of two pointons.

\begin{Remark} \label{Dirac bosons} \rm{
It was shown in \cite{B6} that pointons have spin $\tfrac 12$, and thus are fermions.
Furthermore, a bound state of an even number of fermions, such as a Cooper pair, is a boson.
Therefore \textit{solutions to the Dirac equation are bosons.} 
We thus arrive at an interpretation of the Dirac equation that is opposite to that of standard quantum theory: solutions to the Dirac equation are fermions in quantum theory, whereas solutions are bosons in the framework of internal spacetime. 
} \end{Remark}

Recall that the spinors ${[a0]}$ and ${[0b]}$, $a,b \in \{ \uparrow, \downarrow \}$, represent pointons with negative and positive charges respectively, and set ${[ab] := [a0] + [0b]}$.
Explicitly, 
\begin{center}
\begin{tabular}{lcl}
$[\uparrow \downarrow ] := [\uparrow \! 0] + [ 0 \! \downarrow] = \left[ \begin{smallmatrix} e^{i \omega t} \\ 0 \\ -e^{i \omega t} \\ 0 \end{smallmatrix} \right]_{\mathscr{C}}$
& \ \ \ \ & 
$[\downarrow \uparrow ] := [\downarrow \! 0 ] + [ 0 \! \uparrow] = \left[ \begin{smallmatrix} e^{-i \omega t} \\ 0 \\ -e^{-i \omega t} \\ 0 \end{smallmatrix} \right]_{\mathscr{C}}$
\\
$[\uparrow \uparrow ] := [\uparrow \! 0] + [ 0 \! \uparrow] = \left[ \begin{smallmatrix} e^{i \omega t} \\ 0 \\ -e^{-i \omega t} \\ 0 \end{smallmatrix} \right]_{\mathscr{C}}$
& \ \ \ \ & 
$[\downarrow \downarrow ] := [\downarrow \! 0 ] + [ 0 \! \downarrow] = \left[ \begin{smallmatrix} e^{-i \omega t} \\ 0 \\ -e^{i \omega t} \\ 0 \end{smallmatrix} \right]_{\mathscr{C}}$
\end{tabular}
\end{center}

\begin{Lemma} \label{[ab]}
For $a,b \in \{ \uparrow, \downarrow \}$, the pointon spinors ${[a0]}$ and ${[0b]}$ may couple.
\end{Lemma}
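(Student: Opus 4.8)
The plan is to check that, for every choice of $a,b\in\{\uparrow,\downarrow\}$, the pair $[a0]$, $[0b]$ satisfies all the requirements for a coupling described in part (\textsc{a}) above: opposite electric charge, coincident worldlines, a common angular frequency $\omega$, and a mass term of $\mathcal{L}$ that is supported at a point where the two spinor particles meet. The charge condition is immediate from Corollary \ref{charge corollary}: since $\gamma^5 = \left[\begin{smallmatrix} -\bm 1 & \\ & \bm 1\end{smallmatrix}\right]_{\mathscr C}$, we have $\gamma^5[a0] = -[a0]$ and $\gamma^5[0b] = [0b]$, so $[a0]$ is a charge $-1$ pointon spinor and $[0b]$ a charge $+1$ one. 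Coincident worldlines and a common $\omega$ are just free choices: put both pointons on one worldline $\beta\subset\tilde M$, and since $\omega = mu$ with the speed $u$ of each spinor particle a free parameter (Definition \ref{mass def} and the discussion following it), tune $u_1,u_2$ so that $\omega_1 = \omega_2 =: \omega$.

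Next I would verify the geometric condition. Taking in addition equal radii $r_1 = r_2 = r = \omega^{-1}u$, the two spinor particles (\ref{spinon}) trace circles of the same radius, centered on $\beta(t)$, in the plane orthogonal to the common spin axis $e_3$; hence the two orbits coincide as subsets of the ($3$-dimensional) tangent space $M_{\beta(t)}$. Plugging the respective spin orientations $o_{12}$ into (\ref{spinon}) then shows that the two spinor particles occupy the same point of $M_{\beta(t_0)}$ either for all $t_0$ (after a harmless phase alignment, when the orbits are co-rotating) or at the two instants per period where the orbits cross (when they counter-rotate). At such a point $\beta(t_0)$ the mass terms $\bar\psi^{\mp}\hat\omega\psi^{\pm}$ of $\mathcal{L}$ are excited and the equations of motion (\ref{apex equations}), equivalently the Dirac equation (\ref{coupling equation}), apply, so the two pointons couple there. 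This argument must be carried out for each of the four spin pairs, keeping track of the fact (established at the end of Section \ref{4-momentum section}) that the rotation sense of a spinor particle is fixed by the product $o_0o_{12}$ of charge and spin; checking that no spin combination is geometrically obstructed is the one point that needs a little care, and it is where I expect the (modest) main work to lie.

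Finally I would record that the coupling is genuinely nontrivial by evaluating the mass terms on the explicit spinors of Proposition \ref{ngu}. Writing $[a0] = (e^{\delta_a i\omega t},0,0,0)_{\mathscr C}$ and $[0b] = (0,0,-e^{-\delta_b i\omega t},0)_{\mathscr C}$ with $\delta_\uparrow = +1$, $\delta_\downarrow = -1$, and using $\hat\omega\psi = o_{12}\omega\psi$ together with $\bar\psi = \psi^\dagger\gamma^0$ in the chiral basis, one finds
\begin{equation*}
\overline{[a0]}\,\hat\omega\,[0b] \;=\; -\,\delta_b\,\omega\, e^{-(\delta_a+\delta_b)i\omega t}
\qquad\text{and}\qquad
\overline{[0b]}\,\hat\omega\,[a0] \;=\; -\,\delta_a\,\omega\, e^{(\delta_a+\delta_b)i\omega t},
\end{equation*}
each nonzero for all $a,b\in\{\uparrow,\downarrow\}$ (constant in $t$ when $a\ne b$, oscillating when $a=b$). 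Hence the coupling term in $\mathcal{L}$ does not vanish identically, $[a0]$ and $[0b]$ may couple, and their sum is the geom $[ab] := [a0] + [0b]$ displayed above. (The further observation that the two pieces cancel in $\mathcal{L}$ precisely when $a\ne b$ is not needed for the lemma, but it is consistent with the identification of $[\uparrow\downarrow]$ and $[\downarrow\uparrow]$ with the massless photon in Table \ref{table1}.)
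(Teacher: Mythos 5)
There is a genuine gap: you never verify the condition that the paper actually uses to define ``may couple.'' In the discussion preceding the lemma, coupling of two opposite-charge pointon spinors is characterized by the Dirac equation (\ref{coupling equation}), i.e.\ $\delta(\psi^- + \psi^+)(p) = 0$ at some point $p$ along the shared worldline, and the paper's proof consists precisely of computing $\delta([a0]+[0b]) = (i\slashed\partial - \hat{\omega})[ab]$ and showing it vanishes identically when $a \neq b$ and vanishes periodically, at $t = \tfrac{(2n+1)\pi}{2\omega}$, when $a = b$ (equations (\ref{ttngu}) and (\ref{ttngu2})). Your argument replaces this with a different criterion: you check that the cross mass terms $\overline{[a0]}\,\hat\omega\,[0b]$ and $\overline{[0b]}\,\hat\omega\,[a0]$ are nonzero, plus a heuristic argument that the two spinor-particle orbits intersect. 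Nonvanishing of the interaction term in $\mathcal{L}$ is neither the stated criterion nor equivalent to it --- the equations of motion $i\slashed\partial\psi^{\pm} = \hat\omega\psi^{\mp}$ are imposed where the spinor particles meet, and ``may couple'' means these equations can actually be satisfied there, which is exactly the pointwise vanishing of $\delta$ on the sum. Indeed, the same vanishing criterion is what the paper later uses to \emph{exclude} pairs (e.g.\ $[\downarrow\!*]$ with $[\downarrow\!*]$, or $[aa]$ with $[bc]$, $b \neq c$) in the fusion theorem, so it cannot be bypassed.

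The omission is not incidental: you explicitly set aside the cancellation occurring for $a \neq b$ as ``not needed for the lemma,'' whereas in the paper's logic that cancellation (and its periodic analogue $\delta[aa](t) = 0$ at $t = \tfrac{(2n+1)\pi}{2\omega}$) \emph{is} the content of the lemma. Your preliminary observations (opposite chirality/charge via Corollary \ref{charge corollary}, coincident worldlines, common $\omega$) and your evaluation of the mass terms are computationally correct, but to close the proof you would still have to carry out the short calculation $\delta[ab] = (i\slashed\partial - \hat\omega)([a0]+[0b])$ for the four spin pairs and exhibit the times at which it vanishes; the geometric intersection argument cannot substitute for it.
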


\begin{proof}
We must show that ${[a0]}$ and ${[0b]}$ satisfy (\ref{coupling equation}).
Indeed, we have
\begin{multline} \label{ttngu}
\delta([ \uparrow \! 0] + [0 \! \uparrow ]) = (i \slashed \partial - \hat{\omega}) [ \uparrow \uparrow ] \\
 = \left[ \begin{matrix} 
-\hat{\omega} & i \partial_0 \\ 
i \partial_0 & -\hat{\omega} \end{matrix} \right] 
\left[ \begin{smallmatrix} e^{i \omega t} \\ 0 \\ -e^{-i \omega t} \\ 0 \end{smallmatrix} \right]_{\mathscr{C}}  
= \left[ \begin{smallmatrix} -\omega e^{i \omega t} - \omega e^{-i \omega t} \\ 0 \\ \omega e^{i \omega t} + \omega e^{-i \omega t} \\ 0 \end{smallmatrix} \right]_{\mathscr{C}}
= 2 \omega \left[ \begin{smallmatrix} -\cos (\omega t) \\ 0 \\ \cos (\omega t) \\ 0 \end{smallmatrix} \right]_{\mathscr{C}}.
\end{multline}
Thus, $\delta [ \uparrow \uparrow ](t) = -\delta [\downarrow \downarrow] (t) = 0$ for $t = \tfrac{(2n+1) \pi }{2\omega}$, $n \in \mathbb{Z}$.
Similarly, 
\begin{multline} \label{ttngu2}
\delta([ \uparrow \! 0] + [0 \! \downarrow ]) = (i \slashed \partial - \hat{\omega}) [ \uparrow \downarrow ] \\
 = \left[ \begin{matrix} 
-\hat{\omega} & i \partial_0 \\ 
i \partial_0 & -\hat{\omega} \end{matrix} \right] 
\left[ \begin{smallmatrix} e^{i \omega t} \\ 0 \\ -e^{i \omega t} \\ 0 \end{smallmatrix} \right]_{\mathscr{C}}  
= \left[ \begin{smallmatrix} -\omega e^{i \omega t} + \omega e^{i \omega t} \\ 0 \\ -\omega e^{i \omega t} + \omega e^{i \omega t} \\ 0 \end{smallmatrix} \right]_{\mathscr{C}}
= \left[ \begin{smallmatrix} 0 \\ 0 \\ 0 \\ 0 \end{smallmatrix} \right].
\end{multline}
\end{proof}

\begin{Remark} \rm{
It is essential in (\ref{ttngu}) that we use the operator $\hat{\omega}$, and not $\pm \omega$, since 
\begin{equation*}
(i \slashed \partial + \omega) [ \uparrow \uparrow ] = 2\omega \left[ \begin{smallmatrix} \cos (\omega t) \\ 0 \\ - \sin (\omega t) \\ 0 \end{smallmatrix} \right]
\ \ \ \ \text{ and } \ \ \ \ 
(i \slashed \partial - \omega) [ \uparrow \uparrow ] = -2\omega \left[ \begin{smallmatrix} \sin (\omega t) \\ 0 \\ \cos (\omega t) \\ 0 \end{smallmatrix} \right]
\end{equation*}
are both nonvanishing for all $t$.
}\end{Remark}

In Proposition \ref{ngu} we found that the spinors ${[a0]}$ and ${[0a]}$ represent single pointons using the circular trajectory (\ref{spinon}), and in the following we again show that these spinors represent single pointons, but instead by using the Dirac equation as described in Remark \ref{Dirac bosons}.
This shows consistency within our model.

\begin{Proposition} \label{ngu3} \
\begin{enumerate}
\item[(i)] Each of the spinors $[a0]$, $[0a]$, with $a \in \{ \uparrow, \downarrow \}$, is not in the kernel of $\delta$ for any $t$, and so represents a single pointon. 
Thus, each spinor is a fermion. 
\item[(ii)] Each of the spinors $[ab]$, with $a,b \in \{ \uparrow, \downarrow \}$, is in the kernel of $\delta$ for periodic $t$, and so represents a bound state of two pointons.
Thus, each spinor is a neutral boson.
\end{enumerate}
\end{Proposition}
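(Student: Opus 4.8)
The plan is to verify each claim by direct computation with the explicit column-vector forms already recorded above, using the operator $\delta = i\slashed\partial - \hat\omega$ in the chiral basis and exploiting that every spinor in sight is independent of the spatial coordinates (so $\slashed\partial$ reduces to $\gamma^0\partial_0$, and in the rest frame $i\slashed\partial$ acts as the block matrix $\left[\begin{smallmatrix} & i\partial_0 \\ i\partial_0 & \end{smallmatrix}\right]_{\mathscr{C}}$). For part (i), I would apply $\delta$ to each of $[\uparrow\!0]$, $[\downarrow\!0]$, $[0\!\uparrow]$, $[0\!\downarrow]$ and observe that the result is a nonzero multiple of the same spinor for all $t$: e.g. $\hat\omega[\uparrow\!0] = \omega[\uparrow\!0]$ while $i\partial_0$ sends the top component $e^{i\omega t}$ to the bottom slot as $-\omega e^{i\omega t}$, and since the two surviving components never vanish simultaneously, $\delta[\uparrow\!0]\neq 0$ for every $t$. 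Concretely one computes $\delta[\uparrow\!0] = \left[\begin{smallmatrix} -\omega e^{i\omega t} \\ 0 \\ -\omega e^{i\omega t} \\ 0\end{smallmatrix}\right]_{\mathscr{C}}$ (up to sign bookkeeping of the $\gamma^0$ off-diagonal blocks), which is manifestly nowhere zero; the remaining three cases are identical after flipping the sign of the phase or moving to the lower block. Then invoking that pointons are spin-$\tfrac12$ fermions (from \cite{B6}) and that a spinor not in $\ker\delta$ represents a single pointon by the interpretation established in Section \ref{Lagrangian section}, each such spinor is a fermion.

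For part (ii), the key input is already Lemma \ref{[ab]} together with the computations \eqref{ttngu} and \eqref{ttngu2}: I would note that $\delta[\uparrow\uparrow] = 2\omega\left[\begin{smallmatrix}-\cos(\omega t)\\0\\\cos(\omega t)\\0\end{smallmatrix}\right]_{\mathscr{C}}$ vanishes exactly when $\cos(\omega t)=0$, i.e.\ at $t = \tfrac{(2n+1)\pi}{2\omega}$, and $\delta[\downarrow\downarrow] = -\delta[\uparrow\uparrow]$ so it vanishes at the same instants; while $\delta[\uparrow\downarrow] = \delta[\downarrow\uparrow] = 0$ identically, which is an even stronger (hence certainly periodic) statement. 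So in all four cases $[ab]\in\ker\delta$ for a periodic set of $t$, and by \eqref{coupling equation} and the discussion following Remark \ref{Dirac bosons}, such a solution of the Dirac equation is a bound state of the two oppositely charged pointons $[a0]$ and $[0b]$. Since a bound state of two spin-$\tfrac12$ pointons of opposite charge is a neutral boson, each $[ab]$ is a neutral boson.

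I do not anticipate a serious obstacle — the whole statement is a bookkeeping assembly of facts already in place. The one point requiring care is the sign and block structure when acting with $i\slashed\partial = i\gamma^0\partial_0$ in the chiral basis (where $\gamma^0 = \left[\begin{smallmatrix} & \bm1 \\ \bm1 & \end{smallmatrix}\right]_{\mathscr{C}}$), so that the ``$\delta$ never vanishes'' claim in (i) is not accidentally sabotaged by a cancellation; one checks that for $[a0]$ the nonzero components after applying $\delta$ are the first and third (never simultaneously zero since $|e^{\pm i\omega t}|=1$), and symmetrically the second and fourth for $[0a]$, so no $t$ kills the output. A secondary point is simply to cite the correct earlier results: that pointons are spin-$\tfrac12$ (hence fermionic) from \cite{B6}, that single pointons $\leftrightarrow$ spinors outside $\ker\delta$ and bound pairs $\leftrightarrow$ solutions of the Dirac equation from the setup of this section, and that an even bound state of fermions is a boson (Remark \ref{Dirac bosons}). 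With those in hand the proof is three or four lines per item.
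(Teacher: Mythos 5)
Your proposal is correct and follows essentially the same route as the paper, which simply states that (i) is straightforward to verify and that (ii) follows from the computations (\ref{ttngu}) and (\ref{ttngu2}); your explicit check $\delta[\uparrow\!0]=\bigl[\begin{smallmatrix}-\omega e^{i\omega t}\\ 0\\ -\omega e^{i\omega t}\\ 0\end{smallmatrix}\bigr]_{\mathscr{C}}$ is exactly the ``straightforward'' verification the paper omits. The only blemish is a trivial bookkeeping slip: for $[0a]$ (which sits in the third slot) the nonvanishing components of $\delta[0a]$ are again the first and third, not the second and fourth, but this does not affect the conclusion that $\delta[0a]$ is nowhere zero.
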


\begin{proof}
(i) is straightforward to verify. 
(ii) holds by (\ref{ttngu}) and (\ref{ttngu2}).
\end{proof}

Two spinors $\psi_j = \psi_j^- + \psi_j^+$, $j \in \{ 1,2 \}$, may couple via the mass term $\bar{\psi}_1 \hat{\omega} \psi_2$ of $\mathcal{L} = \bar{\psi}_1 \delta \psi_2$, yielding a bound state of at most four pointons: two of positive charge, $\psi_1^+$, $\psi_2^+$, and two of negative charge, $\psi_1^-$, $\psi_2^-$.
We call such a state a \textit{geom}, for `geometric atom', and call $\psi_1$ and $\psi_2$ its \textit{orbitals}. 
We denote a geom $\bar{\psi}_1 \hat{\omega} \psi_2$ by its unordered pair, $[\psi_1, \psi_2] = [\psi^-_1 \psi^+_1, \psi^-_2 \psi^+_2] = [\psi^-_2 \psi^+_2, \psi^-_1 \psi^+_1] = [\psi_2, \psi_1]$. 

Since pointons are fermions, geoms are constrained by the Pauli exclusion principle: a geom cannot contain two pointons of equal charge and spin.
For example, the states ${[\uparrow \! 0, \downarrow \downarrow ]}$ and ${[\uparrow \downarrow, \downarrow \uparrow ]}$ are allowed, whereas ${[\uparrow \! 0, \uparrow \downarrow ]}$ and ${[\uparrow \downarrow, \downarrow \downarrow ]}$ are not allowed since they violate the Pauli exclusion principle.

\begin{Remark} \label{observations} \rm{
We make two observations:
\begin{itemize}
 \item A geom is a boson (resp.\ fermion) if it contains an even (resp.\ odd) number of pointons, since pointons are spin-$\tfrac 12$ fermions.
 \item The electric charge of a geom is the sum of the electric charges of its pointons.
\end{itemize}
}\end{Remark}

\vspace*{.25cm}

\textbf{(\textsc{b}) Pair creation/annihilation of pointons of opposite charge.} 

Excitations of the mass terms $\bar{\psi}^{\pm} \hat{\omega} \psi^{\mp}$ of $\mathcal{L} = \bar{\psi}\delta \psi$ may also represent pair creation/annihilation of two pointons of opposite charge, that is, a two-valent interaction vertex. 
We thus say that at $p \in \tilde{M}$, two pointon spinors $\psi_1$, $\psi_2$ may couple, or two orbitals $\psi_1$, $\psi_2$ may \textit{fuse}, if whenever $\psi_1$, $\psi_2$ are charged resp.\ neutral we have
\begin{equation} \label{interaction equation}
\delta(\psi_1^{\pm} + \psi_2^{\mp})(p) = 0 \ \ \ \text{ resp.\ } \ \ \ \delta(\psi_1 + \psi_2)(p) = 0, 
\end{equation}
and 
\begin{equation} \label{can't fuse}
\delta \psi_1 \equiv 0 \ \ \Longleftrightarrow \ \ \delta \psi_2 \equiv 0.
\end{equation}

\begin{Lemma}
For $a,b \in \{ \uparrow, \downarrow \}$, the pointon spinors ${[a0]}$ and ${[0b]}$ may fuse. 
\end{Lemma}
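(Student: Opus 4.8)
The plan is to obtain the statement directly from the fusion conditions \eqref{interaction equation}--\eqref{can't fuse}, feeding in the computations already performed in Lemma \ref{[ab]} and Proposition \ref{ngu3}. First I would record the charge bookkeeping: by Corollary \ref{charge corollary} the spinor ${[a0]}$ is a $\gamma^5$-eigenspinor with eigenvalue $-1$, hence charged, with its negative-chirality part equal to ${[a0]}$ and its positive-chirality part zero; likewise ${[0b]}$ is a $\gamma^5$-eigenspinor with eigenvalue $+1$, hence charged, with positive-chirality part ${[0b]}$ and negative-chirality part zero. Since both spinors are charged and of opposite charge, the relevant instance of \eqref{interaction equation} to verify is $\delta({[a0]} + {[0b]})(p) = \delta({[ab]})(p) = 0$ for some $p = \beta(t)$, where ${[ab]} := {[a0]} + {[0b]}$ as in the discussion preceding Lemma \ref{[ab]}.

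Next I would invoke that discussion: the computations \eqref{ttngu} and \eqref{ttngu2} show that $\delta{[ab]}(t)$ vanishes identically when $a \neq b$, and when $a = b$ equals $\pm 2\omega\,(-\cos\omega t,\,0,\,\cos\omega t,\,0)$ in the chiral basis, which is zero precisely for $t \in \{(2n+1)\pi/(2\omega) : n \in \mathbb{Z}\}$. In all four cases there is a (periodic set of) point(s) $\beta(t)$ at which $\delta({[a0]} + {[0b]})$ vanishes, so the first fusion condition \eqref{interaction equation} holds; this is exactly the content of Lemma \ref{[ab]}, so one may alternatively just cite it.

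Finally, for the compatibility condition \eqref{can't fuse} I would appeal to Proposition \ref{ngu3}(i): neither ${[a0]}$ nor ${[0b]}$ lies in $\ker\delta$ for any $t$, so $\delta{[a0]} \not\equiv 0$ and $\delta{[0b]} \not\equiv 0$, whence the biconditional $\delta{[a0]} \equiv 0 \Leftrightarrow \delta{[0b]} \equiv 0$ holds with both sides false. With \eqref{interaction equation} and \eqref{can't fuse} both verified, ${[a0]}$ and ${[0b]}$ may fuse. The argument involves no genuine obstacle: it is pure bookkeeping over the earlier results, and the only point deserving a moment's attention is that the charged branch of \eqref{interaction equation} (not the neutral branch) applies, which is what lets us reduce the check to the already-studied spinor ${[ab]}$.
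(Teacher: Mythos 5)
Your proof is correct and follows essentially the same route as the paper, whose entire proof is a citation of the computations \eqref{ttngu}--\eqref{ttngu2} in Lemma \ref{[ab]}. The only difference is that you explicitly verify the compatibility condition \eqref{can't fuse} via Proposition \ref{ngu3}(i), a point the paper leaves implicit, which is a harmless (indeed welcome) extra bit of bookkeeping.
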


\begin{proof}
See the proof of Lemma \ref{[ab]}.
\end{proof}

In general, two spinors $\psi_1 = {[ab]}$ and $\psi_2 = {[cd]}$, $a,b,c,d \in \{ 0, \uparrow, \downarrow \}$, that satisfy (\ref{interaction equation}) and (\ref{can't fuse}) may fuse into one of three spinors ${[ad]}$, ${[cb]}$, ${[00]}$:
\begin{equation*}
\text{\footnotesize{$
{\arraycolsep=2.9pt 
\begin{array}{ccc}
\xymatrix @C=-.3pc @R-2pc{
[ a & b \ar@{}[ld]^(.11){}="e"^(.95){}="f" \ar@{-} "e";"f" ]\\
[ c & d ] \\
\ar@{}[rr]^(-.45){}="a"^(1.0){}="b" \ar@{-} "a";"b" &&&&\\
[ a & d ] }
&
\xymatrix @C=-.3pc @R-2pc{
[ a \ar@{}[rd]^(.05){}="e"^(.9){}="f" \ar@{-} "e";"f" & b ]\\
[ c & d ] \\
\ar@{}[rr]^(-.45){}="a"^(1.0){}="b" \ar@{-} "a";"b" &&&&\\
[ c & b ] }
&
\xymatrix @C=-.3pc @R-2pc{
[ a \ar@{}[rd]^(.05){}="e"^(.9){}="f" \ar@{-} "e";"f" & b \ar@{}[ld]^(.11){}="e"^(.95){}="f" \ar@{-} "e";"f" ]\\
[ c & d ] \\
\ar@{}[rr]^(-.45){}="a"^(1.0){}="b" \ar@{-} "a";"b" &&&&\\
[ 0 & 0 ] }
\end{array} }
$}}
\end{equation*}
where the diagonal lines indicate pointon pair annihilation $\bar{\psi}_1^{\pm} \hat{\omega} \psi_2^{\mp}$.

By writing the spinors ${[ab]}$, $a, b \in \{ \uparrow, \downarrow \}$, in the Dirac basis using (\ref{chiral - Dirac}), we see that the spinors ${[\uparrow \downarrow]}$ and ${[\downarrow \uparrow]}$ are $\gamma^0$ eigenspinors, whereas ${[\uparrow \uparrow]}$ and ${[\downarrow \downarrow]}$ are not.
Moreover, similar to the pointon spinors, ${[\uparrow \downarrow]}$ and ${[\downarrow \uparrow]}$ may be obtained from the circular trajectory (\ref{spinon}) with the roles of $\gamma(e_0) = \gamma^0$ and $\gamma^5$ swapped:
\begin{equation*}
[\uparrow \downarrow] = \left[ \begin{smallmatrix} e^{i \omega t} \\ 0 \\ -e^{i \omega t} \\ 0 \end{smallmatrix} \right]_{\mathscr{C}} = \left[ \begin{smallmatrix} 0 \\ 0 \\ -e^{i \omega t} \\ 0 \end{smallmatrix} \right]_{\mathscr{D}} \ \ \ \ \ \text{ and } \ \ \ \ \
[\downarrow \uparrow] = \left[ \begin{smallmatrix} e^{-i \omega t} \\ 0 \\ -e^{-i \omega t} \\ 0 \end{smallmatrix} \right]_{\mathscr{C}} = \left[ \begin{smallmatrix} 0 \\ 0 \\ -e^{-i \omega t} \\ 0 \end{smallmatrix} \right]_{\mathscr{D}}.
\end{equation*}
\textit{It is therefore natural to extend the generating set of our composite model to include a maximal set of covariantly independent pointon $\gamma^0$ eigenspinors, just as the generators include a maximal set of covariantly independent pointon $\gamma^5$ eigenspinors.}

\begin{Proposition} \label{ngu2!} \
The spinors
\begin{equation} \label{max set2}
\left[ \begin{smallmatrix} e^{i \omega t} \\ 0 \\ 0 \\ 0 \end{smallmatrix} \right]_{\mathscr{D}} =: [ \, \downarrow \, ], \ \ \ 
\left[ \begin{smallmatrix} e^{-i \omega t} \\ 0 \\ 0 \\ 0 \end{smallmatrix} \right]_{\mathscr{D}} =: [ \, \uparrow \, ], \ \ \ 
\left[ \begin{smallmatrix} 0 \\ 0 \\ -e^{-i \omega t} \\ 0 \end{smallmatrix} \right]_{\mathscr{D}} = [ \downarrow \uparrow ], \ \ \ 
\left[ \begin{smallmatrix} 0 \\ 0 \\ -e^{i \omega t} \\ 0 \end{smallmatrix} \right]_{\mathscr{D}} = [ \uparrow \downarrow ]
\end{equation}
form a maximal covariantly independent set of pointon $\gamma^0$ eigenspinors, denoted $\mathcal{O}_0$.
\end{Proposition}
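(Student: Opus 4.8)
The plan is to rerun the argument of Proposition \ref{ngu} verbatim, but reading every matrix in the Dirac basis $\mathscr{D}$ rather than the chiral basis $\mathscr{C}$. The observation that makes this transcription legitimate is that $(\gamma^j)_{\mathscr{C}}=(\gamma^j)_{\mathscr{D}}$ as matrices for $j=1,2,3$ (see the Notation paragraph), and that in the proof of Lemma \ref{ildk} the projected $4$-velocity $h(\dot\alpha)=u\cos(\omega t)e_1-o_{12}u\sin(\omega t)e_2$ is purely spatial, so $\gamma(h(\dot\alpha))$ involves only $\gamma^1,\gamma^2$ and is therefore the \emph{same} matrix
\[
\gamma(h(\dot\alpha))=u \left[ \begin{smallmatrix} & & & e^{o_{12} i \omega t} \\ & & e^{-o_{12} i \omega t} & \\ & -e^{o_{12} i \omega t} & & \\ -e^{-o_{12} i \omega t} & & & \end{smallmatrix} \right]
\]
whether it is interpreted in $\mathscr{C}$ or in $\mathscr{D}$. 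In the chiral basis its columns are the $\gamma^5$ eigenspinors of Proposition \ref{ngu}; in the Dirac basis, since $\gamma^0=\left[\begin{smallmatrix}\bm 1 & \\ & -\bm 1\end{smallmatrix}\right]_{\mathscr{D}}$ is diagonal and each column is supported on a single coordinate axis, every column is a pointon $\gamma^0$ eigenspinor, with eigenvalue $+1$ for columns supported in the upper block and $-1$ for those in the lower block. Thus the spinors in (\ref{max set2}) are genuinely pointon $\gamma^0$ eigenspinors (consistent with the remark before the Proposition that $[\uparrow\downarrow]$ and $[\downarrow\uparrow]$ also arise from the circular trajectory (\ref{spinon})).

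First I would list the eight columns of the displayed matrix for the two choices $o_{12}\in\{\pm1\}$, rescaled uniformly by $u^{-1}$ as in Lemma \ref{ildk}. Then I would apply the covariant reduction exactly as in Proposition \ref{ngu}: the Lorentz transformation $\Lambda=\operatorname{diag}(1,1,-1,-1)=\Lambda_{\pm i\sigma^1}$ acts on spinors in the Dirac basis — just as in the chiral basis — by $\rho(i\sigma^1)=i\left[\begin{smallmatrix}\sigma^1 & \\ & \sigma^1\end{smallmatrix}\right]$ (equation (\ref{dfak})), which swaps the two entries of each $2$-block up to the constant phase $i$. Hence $\Lambda$, together with an overall rescaling, identifies the eight columns in pairs, leaving the four representatives $[\downarrow]=[e^{i\omega t},0,0,0]_{\mathscr{D}}$, $[\uparrow]=[e^{-i\omega t},0,0,0]_{\mathscr{D}}$, $[\uparrow\downarrow]=[0,0,-e^{i\omega t},0]_{\mathscr{D}}$, $[\downarrow\uparrow]=[0,0,-e^{-i\omega t},0]_{\mathscr{D}}$, i.e.\ precisely the set (\ref{max set2}). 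Since every pointon $\gamma^0$ eigenspinor (for the given $\omega$) is one of these eight columns up to $L_0$ and rescaling, this yields maximality.

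It remains to verify covariant independence: no two of the four spinors are related by $c\Lambda$ with $c\in\mathbb{C}$, $\Lambda\in L_0$. I expect this to be the only step requiring a genuine, if short, computation. The key points are that a nontrivial boost would introduce spatial dependence that none of these time-plane-wave fields possesses, so only rotations can relate them; and that rotations act block-diagonally in the Dirac basis, by $\operatorname{diag}(s,s)_{\mathscr{D}}$ with $s\in\operatorname{SU}(2)$, hence preserve the $\gamma^0$ eigenvalue, which already separates $\{[\downarrow],[\uparrow]\}$ from $\{[\uparrow\downarrow],[\downarrow\uparrow]\}$. Within either pair the two members differ only by the phases $e^{\pm i\omega t}$, which a constant unitary $s$ cannot interconvert; so the four are pairwise covariantly independent and $\mathcal{O}_0$ is as claimed. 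The subtle — and really the only conceptually new — point of the whole argument is the first one: that "pointon spinor" is basis-dependent, and that the single computation of Lemma \ref{ildk} delivers $\gamma^5$ eigenspinors when read in $\mathscr{C}$ and $\gamma^0$ eigenspinors when read in $\mathscr{D}$.
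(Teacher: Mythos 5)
Your proposal is correct and follows essentially the same route as the paper, whose proof of this proposition is simply "similar to the proof of Proposition \ref{ngu}": you rerun that argument, pairing the eight columns (for $o_{12} = \pm 1$) under $\Lambda = \Lambda_{\pm i\sigma^1}$ acting by $\rho(i\sigma^1)$, which acts identically in both bases. Your explicit observation that $\gamma(h(\dot\alpha))$ involves only $\gamma^1,\gamma^2$ and hence is the same numerical matrix in $\mathscr{C}$ and $\mathscr{D}$ — so that its columns read in the Dirac basis are $\gamma^0$ eigenspinors — is exactly the "roles of $\gamma^0$ and $\gamma^5$ swapped" point the paper makes just before the proposition, and your covariant-independence check fills in detail the paper leaves implicit.
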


\begin{proof}
Similar to the proof of Proposition \ref{ngu}.
\end{proof}

Using $u = \omega r = \omega m^{-1}$, the $\gamma^0$ eigenspinors (\ref{max set2}) decompose in the chiral basis as
\begin{equation*}
\begin{array}{lcl}
\text{{\tiny $\sqrt{2}u$}} [\, \downarrow \,  ] = \text{{\tiny $u$}} \! \left[ \begin{smallmatrix} e^{i \omega t} \\ 0 \\ e^{i \omega t} \\ 0 \end{smallmatrix} \right]_{\mathscr{C}} \!
 = \tfrac{\omega}{m}[\uparrow \! 0] + \tfrac{\omega}{-m} [0 \! \downarrow]
& \ \ \ &
\text{{\tiny $\sqrt{2}u$}} [\, \uparrow \, ] = \text{{\tiny $u$}} \! \left[ \begin{smallmatrix} e^{-i \omega t} \\ 0 \\ e^{-i \omega t} \\ 0 \end{smallmatrix} \right]_{\mathscr{C}} \!
= \tfrac{\omega}{m}[\downarrow \! 0] + \tfrac{\omega}{-m}[0 \! \uparrow]
\\
\text{{\tiny $\sqrt{2}u$}} [ \downarrow \uparrow ] = \text{{\tiny $u$}} \! \left[ \begin{smallmatrix} e^{-i \omega t} \\ 0 \\ -e^{-i \omega t} \\ 0 \end{smallmatrix} \right]_{\mathscr{C}}  \! 
= \tfrac{\omega}{m}[\downarrow \! 0 ] + \tfrac{\omega}{m}[ 0 \! \uparrow] 
& &
\text{{\tiny $\sqrt{2}u$}} [ \uparrow \downarrow ]  = \text{{\tiny $u$}} \! \left[ \begin{smallmatrix} e^{i \omega t} \\ 0 \\ -e^{i \omega t} \\ 0 \end{smallmatrix} \right]_{\mathscr{C}} \! 
= \tfrac{\omega}{m}[ \uparrow \! 0 ] + \tfrac{\omega}{m}[ 0 \! \downarrow]
\end{array}
\end{equation*}
In contrast to ${[\uparrow \downarrow]}$ and ${[\downarrow \uparrow]}$, the spinors ${[\, \downarrow \, ]}$ and ${[\, \uparrow \, ]}$ cannot decay into their charged $\gamma^5$ eigenspinor summands because $\omega$ is positive by definition, and therefore one of the two summands has negative mass, $-m$. 
Indeed, in the following we show that these spinors are \textit{not} bound states of two pointons using the Dirac equation, thus again showing consistency within our model.

\begin{Lemma} \label{parity violation cor}
The spinors ${[\, \downarrow \, ]}$ and ${[\, \uparrow \, ]}$ each represent a single pointon.
Furthermore, they admit a single nonvanishing fusion.
\end{Lemma}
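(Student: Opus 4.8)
The plan is a direct rest-frame computation, using the chiral decompositions recorded just above the statement, namely $[\,\downarrow\,] = \tfrac{1}{\sqrt 2}([\uparrow\!0] - [0\!\downarrow])$ and $[\,\uparrow\,] = \tfrac{1}{\sqrt 2}([\downarrow\!0] - [0\!\uparrow])$. First, for the single-pointon claim: $[\,\downarrow\,]$ and $[\,\uparrow\,]$ are $\gamma^0$ eigenspinors with eigenvalue $+1$ (Proposition \ref{ngu2!}), so in the rest frame $i\slashed\partial = i\gamma^0\partial_0$ acts on them just as $i\partial_0$, hence as multiplication by $-\omega$ on $[\,\downarrow\,]$ (phase $e^{i\omega t}$) and by $+\omega$ on $[\,\uparrow\,]$ (phase $e^{-i\omega t}$). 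Next one evaluates $\hat\omega$ by $\mathbb C$-linearity on $\langle\mathcal O_5\rangle$ together with the rule (Proposition \ref{ngu}, (\ref{nguxinfty})) that it multiplies a pointon spinor by $\pm\omega$ according to the sign of its phase $e^{\pm i\omega t}$; since $[\uparrow\!0],[0\!\downarrow]$ both carry $e^{i\omega t}$ and $[\downarrow\!0],[0\!\uparrow]$ both carry $e^{-i\omega t}$, this yields $\hat\omega[\,\downarrow\,] = \omega[\,\downarrow\,]$ and $\hat\omega[\,\uparrow\,] = -\omega[\,\uparrow\,]$ (equivalently, a one-line matrix multiplication in the rest frame as in (\ref{ttngu})--(\ref{ttngu2})). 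Combining, $\delta[\,\downarrow\,] = -2\omega[\,\downarrow\,]$ and $\delta[\,\uparrow\,] = 2\omega[\,\uparrow\,]$, both nonzero for every $t$ since $\omega > 0$ and $[\,\downarrow\,],[\,\uparrow\,]$ never vanish. By the criterion used in Proposition \ref{ngu3}(i)---a spinor never lying in $\ker\delta$ represents a single pointon, hence is a fermion---the first sentence follows; this also makes precise the heuristic mentioned just before the lemma, that $[\,\downarrow\,]$ and $[\,\uparrow\,]$ cannot decay into their $\gamma^5$-eigenspinor summands.

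For the second claim I would run through the fusion conditions (\ref{interaction equation})--(\ref{can't fuse}). Both $[\,\downarrow\,]$ and $[\,\uparrow\,]$ are neutral (they are not $\gamma^5$ eigenspinors, so Corollary \ref{charge corollary} applies), so any fusion partner is neutral, and by (\ref{can't fuse}) together with the first part it must satisfy $\delta\psi \not\equiv 0$. In the rest frame the neutral orbitals of the model are $[\uparrow\uparrow]$, $[\downarrow\downarrow]$, $[\uparrow\downarrow]$, $[\downarrow\uparrow]$ and the two ${*}$-orbitals $[\,\downarrow\,] = [\downarrow\!{*}]$, $[\,\uparrow\,] = [{*}\!\downarrow]$; of these, $[\uparrow\downarrow]$ and $[\downarrow\uparrow]$ have $\delta \equiv 0$ (as in (\ref{ttngu2})) and are excluded by (\ref{can't fuse}), while a short check using the first part shows $\delta([\,\downarrow\,] + [\uparrow\uparrow])$, $\delta([\,\downarrow\,] + [\downarrow\downarrow])$ and $\delta(2[\,\downarrow\,])$ never vanish (the last pairing is in any case Pauli-forbidden). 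On the other hand $\delta([\,\downarrow\,] + [\,\uparrow\,]) = 2\omega([\,\uparrow\,] - [\,\downarrow\,])$ has chiral components proportional to $e^{-i\omega t} - e^{i\omega t} = -2i\sin(\omega t)$, so it vanishes for $t \in \tfrac{\pi}{\omega}\mathbb Z$; hence (\ref{interaction equation}) and (\ref{can't fuse}) both hold and $[\,\downarrow\,]$ fuses with $[\,\uparrow\,]$---and with nothing else. Finally, among the three possible fusion products $[ad]$, $[cb]$, $[00]$ of $[\downarrow\!{*}]$ and $[{*}\!\downarrow]$---namely $[\downarrow\downarrow]$, $[{*}{*}]$, $[00]$---only $[\downarrow\downarrow]$ is a nonvanishing orbital, so the fusion is unique and nonvanishing (the corresponding geom is $x_0$ in Table \ref{table1}, and the lack of any counterpart built from ``$\uparrow$'' orbitals is the parity violation in the lemma's label).

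The step I expect to be the main obstacle is the exhaustiveness of the second part: one must be certain that the enumeration of neutral orbitals, and of the fusion products including the degenerate ``${*}$'' slots, is complete, so that the word ``single'' is genuinely earned. The first part is routine once one invokes $\mathbb C$-linearity of $\hat\omega$ on $\langle\mathcal O_5\rangle$ and the $\gamma^0$-eigenspinor property.
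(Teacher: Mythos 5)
Your handling of the first claim, and of the existence of a fusion, is essentially the paper's own proof: you compute $\delta$ on $[\,\downarrow\,]$ and $[\,\uparrow\,]$ in the rest frame, find it never vanishes, invoke the single-pointon criterion of Proposition \ref{ngu3}(i), and then observe that $\delta([\,\downarrow\,]+[\,\uparrow\,])$ vanishes periodically; this reproduces (\ref{co}) and (\ref{updown}). One caveat: you evaluate $\hat{\omega}$ by the sign of the phase, whereas the stated definition is $\hat{\omega}\psi = o_{12}\omega\psi$, and for the positively charged spinors $[0\!\uparrow]$, $[0\!\downarrow]$ the orientation $o_{12}$ is opposite to the phase sign, so strictly $\hat{\omega}[\,\downarrow\,] = \tfrac{\omega}{\sqrt{2}}([\uparrow\!0]+[0\!\downarrow]) \neq \omega[\,\downarrow\,]$. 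Your rule does match the computations actually displayed in (\ref{co})--(\ref{ttngu2}), and the conclusions (nonvanishing of $\delta[\,\downarrow\,]$, $\delta[\,\uparrow\,]$ for all $t$, and vanishing of $\delta([\,\downarrow\,]+[\,\uparrow\,])$ at $t = n\pi/\omega$) are insensitive to which convention is used, so this does not affect the first sentence of the lemma.

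The genuine gap is in the uniqueness part, exactly at the step you flagged. The paper's proof of ``a single nonvanishing fusion'' is the counting argument: since each of $[\,\downarrow\,]$, $[\,\uparrow\,]$ carries a single pointon, the pair can admit only one nonvanishing fusion, just as $[\downarrow\!0]$ and $[0\!\downarrow]$ do. You instead enumerate the three slot-rule products of $[\downarrow\!{*}]$ and $[{*}\!\downarrow]$ as $[\downarrow\downarrow]$, $[{*}{*}]$, $[00]$. But the identification $[\,\downarrow\,] = [\downarrow\!{*}]$, $[\,\uparrow\,] = [{*}\!\downarrow]$ --- in particular that both orbitals carry a $\downarrow$ rather than an $\uparrow$, and which slot is the empty ${*}$ --- is introduced only \emph{after} the lemma, precisely to record the choice of $[\downarrow\downarrow]$ as the unique nonvanishing fusion; the paper's text says the fusion must be $[\uparrow\uparrow]$ or $[\downarrow\downarrow]$ and that ``it can only be one of'' them \emph{by} this lemma, then makes the (arbitrary) choice $[\downarrow\downarrow]$, and only then rewrites the spinors with ${*}$. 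At the point of the lemma the spinors are just the $\gamma^0$ eigenspinors of Proposition \ref{ngu2!}, with no two-slot decomposition available, so deriving uniqueness from the ${*}$-slot enumeration assumes the conclusion (and the subsequent convention) you are trying to prove. What is missing is the paper's inference ``single pointon $\Rightarrow$ single nonvanishing fusion,'' or some independent substitute for it. A secondary caution: your auxiliary exclusivity checks are not needed for the lemma, and at least one of them is wrong under a self-consistent evaluation of $\hat{\omega}$ --- with either convention, $\delta([\,\downarrow\,]+[\downarrow\downarrow])$ vanishes at $t = n\pi/\omega$ (both terms become proportional to the same spinor there), so ``never vanish'' should not be leaned on when arguing that $[\,\downarrow\,]$ fuses ``with nothing else.''
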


\begin{proof}
First observe that
\begin{equation} \label{co}
\delta [\, \downarrow \, ] = (i \slashed \partial - \hat{\omega}) [\, \downarrow \,  ]
= \left[ \begin{matrix} 
-\hat{\omega} & i \partial_0 \\ 
i \partial_0 & -\hat{\omega} \end{matrix} \right] 
\left[ \begin{smallmatrix} e^{i \omega t} \\ 0 \\ e^{i \omega t} \\ 0 \end{smallmatrix} \right]_{\mathscr{C}} = -2 \omega \left[ \begin{smallmatrix} e^{i \omega t} \\ 0 \\ e^{i \omega t} \\ 0 \end{smallmatrix} \right]_{\mathscr{C}}
\end{equation}
is nonvanishing for all $t$.
Similarly, $\delta {[ \, \uparrow \, ]} (t) \not = 0$ for all $t$. 
Thus, ${[\, \downarrow \, ]}$ and ${[\, \uparrow \, ]}$ each represent a single pointon.
The spinors may fuse since
\begin{multline} \label{updown}
\delta([\, \downarrow \, ] + [\, \uparrow \, ]) = (i \slashed \partial - \hat{\omega}) ([\, \downarrow \, ] + [\, \uparrow \, ]) \\
 = \left[ \begin{matrix} 
-\hat{\omega} & i \partial_0 \\ 
i \partial_0 & -\hat{\omega} \end{matrix} \right] 
\left( \left[ \begin{smallmatrix} e^{i \omega t} \\ 0 \\ e^{i \omega t} \\ 0 \end{smallmatrix} \right]_{\mathscr{C}}  + \left[ \begin{smallmatrix} e^{-i \omega t} \\ 0 \\ e^{-i \omega t} \\ 0 \end{smallmatrix} \right]_{\mathscr{C}} \right)
= -2 \omega \left[ \begin{smallmatrix} e^{i \omega t} - e^{-i \omega t} \\ 0 \\ e^{i \omega t} - e^{-i \omega t} \\ 0 \end{smallmatrix} \right]_{\mathscr{C}}
= -4 \omega \left[ \begin{smallmatrix} \sin (\omega t) \\ 0 \\ \sin (\omega t) \\ 0 \end{smallmatrix} \right]_{\mathscr{C}}
\end{multline}
periodically vanishes for $t = \tfrac{n \pi }{\omega}$, $n \in \mathbb{Z}$. 
However, since each spinor represents a single pointon, they can only admit one nonvanishing fusion (just as the pair ${[\downarrow \! 0]}$ and ${[0 \! \downarrow]}$ only admits one nonvanishing fusion).
\end{proof}

In \cite{B1} we will derive electroweak parity violation from this lemma.

Let $\beta \subset \tilde{M}$ be a timelike wordline with $4$-velocity $v = e_0$ and spinor field ${[\, \downarrow \, ]}$ or ${[\, \uparrow \, ]}$.
Since  ${[\, \downarrow \, ]}$ and ${[\, \uparrow \, ]}$ represent neutral particles, the timelike direction $e_0$ does not vanish along $\beta$.
Thus, in the absense of other particles, the dimension of the internal tangent spaces $M_{\beta(t)} \subseteq \tilde{M}_{\beta(t)}$ along $\beta$ is four, $\dim M_{\beta(t)} = 4$.
In particular, $M_{\beta(t)} = \tilde{M}_{\beta(t)}$.
\textit{The pointon spinors ${[\, \downarrow \, ]}$, ${[\, \uparrow \, ]}$ therefore represent spinor particles $\alpha \subset \tilde{M}$, given in (\ref{spinon}), whose central worldlines $\beta$ are not pointons.}\footnote{We nevertheless continue to call these spinors `pointon spinors' for ease of notation.}$^,$\footnote{In our composite model (see Section \ref{composite section}), electron neutrinos $\nu_e$ and anti-neutrinos $\bar{\nu}_e$ are identified with the respective geoms ${[ \, \uparrow \, , 00]}$ and ${[ \, \downarrow \, , 00]}$.
It is well known that proton fusion reactions in the sun yield an electron neutrino flux of approximately $6 \times 10^{10} \, \text{cm}^{-2}\text{s}^{-1}$ on the surface of the Earth.
If the dimensions of the tangent spaces $M_{\beta(t)}$ along neutrino geom worldlines $\beta \subset \tilde{M}$ were less than four, then it is plausible that the enormous flux of (extremely weakly interacting) neutrinos would continually collapse the spin vectors of electrons on the Earth, contrary to observation.
Fortunately, neutrinos are identified with geoms whose tangent spaces are four dimensional, and therefore an electron's spin vector is not affected by passing neutrinos.} 
That is, they are spinors which have broken free from their central pointon worldlines, and thus propagate freely without being bound to a pointon.
However, if the spinors ${[\, \downarrow \, ]}$, ${[\, \uparrow \, ]}$ have color charge (see Section \ref{color section}), then their central worldlines $\beta$ will necessarily be pointons; this is addressed in \cite{B2}.

A fusion of ${[\, \downarrow \, ]}$ and ${[\, \uparrow \, ]}$ has total angular momentum zero and is neutral.
We therefore take their nonvanishing fusion to be ${[\uparrow \uparrow]}$ or ${[\downarrow \downarrow]}$.
(Recall that the spinor particles of ${[a0]}$ and ${[0a]}$ rotate in opposite directions, so ${[\uparrow \uparrow]}$ and ${[\downarrow \downarrow]}$ have total angular momentum zero.)
Furthermore, it can only be one of ${[\uparrow \uparrow]}$ or ${[\downarrow \downarrow]}$, by Lemma \ref{parity violation cor}.
We therefore choose, once and for all, one of these spinors as their nonvanishing fusion; let us choose ${[\downarrow \downarrow]}$.
For clarity, then, we rewrite ${[\, \downarrow \, ]}$ and ${[\, \uparrow \, ]}$ as ${[\downarrow \! *]}$ and ${[ * \! \downarrow]}$:
\begin{equation*} \label{star apex}
\text{\footnotesize{$
{\arraycolsep=2.9pt 
\begin{array}{rl}
\xymatrix @C=-.3pc @R-2pc{
[ \ \downarrow \  ] \ \ \ = \\
[ \ \uparrow \  ] \ \ \ =}
&
\xymatrix @C=-.3pc @R-2pc{
[ \downarrow & {*} \ar@{}[ld]^(.11){}="e"^(.95){}="f" \ar@{-} "e";"f" ]\\
[ {*} & \downarrow ] \\
\ar@{}[rr]^(-.45){}="a"^(1.0){}="b" \ar@{-} "a";"b" &&&&\\
[ \downarrow & \downarrow ] }
\end{array} }
$}}
\end{equation*}

The pointon spinors with negative (resp.\ positive; zero) charge are thus ${[a0]}$ (resp.\ ${[0a]}$; ${[\downarrow \! *]}$, ${[* \! \downarrow ]}$), with $a \in \{ \uparrow, \downarrow \}$, and the set of all orbitals is $\mathcal{O}_0 \cup \mathcal{O}_5 \cup \{ {[\uparrow \uparrow]}, {[\downarrow \downarrow ]} \}$, where 
\begin{equation*}
\mathcal{O}_5 := \{ [\uparrow \! 0], [\downarrow \! 0], [0 \! \uparrow ], [0 \! \downarrow] \} 
\ \ \ \ \text{ and } \ \ \ \ 
\mathcal{O}_0 := \{ [ \uparrow \downarrow], [\downarrow \uparrow ], [\downarrow \! *], [ * \! \downarrow] \}.
\end{equation*}

\begin{Theorem}
The pairs of orbitals that may fuse are precisely
\begin{equation*} \label{fusion pairs}
\{ [a0], [0b] \}, \ \ \ \ \{ [ab], [ab] \}, \ \ \ \ \{ [ab], [ba] \}, \ \ \ \ \{ [aa], [bb] \}, \ \ \ \ \{ [\downarrow \! *], [* \! \downarrow] \},
\end{equation*}
with $a,b \in \{ \uparrow, \downarrow \}$. 
\end{Theorem}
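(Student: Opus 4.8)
I would prove this by an exhaustive check over the unordered pairs drawn from the ten orbitals --- the charged ones $[\uparrow\!0],[\downarrow\!0]$ (charge $-1$) and $[0\!\uparrow],[0\!\downarrow]$ (charge $+1$), the $\gamma^0$-eigenspinors $[\uparrow\downarrow],[\downarrow\uparrow]$, the spinors $[\uparrow\uparrow],[\downarrow\downarrow]$, and $[\downarrow\!*],[*\!\downarrow]$ --- streamlined by two quick reductions. The first reduction uses (\ref{can't fuse}): $\delta$ annihilates an orbital identically precisely for $[\uparrow\downarrow]$ and $[\downarrow\uparrow]$ (this is (\ref{ttngu2})), while for every other orbital $\delta\psi$ is nonzero for all but periodically many $t$ --- for $[a0],[0b]$ by Proposition~\ref{ngu3}(i), for $[\downarrow\!*],[*\!\downarrow]$ by (\ref{co}) and Lemma~\ref{parity violation cor}, and for $[\uparrow\uparrow],[\downarrow\downarrow]$ by (\ref{ttngu}). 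So (\ref{can't fuse}) forces a fusing pair to lie entirely inside $\{[\uparrow\downarrow],[\downarrow\uparrow]\}$ or entirely outside it, which eliminates every ``mixed'' pair at a stroke. The second reduction: fusion is defined in (\ref{interaction equation}) only for a pair that is both charged or both neutral, and a pair of \emph{equal} charge (say $-1$) has $\psi_1^+=\psi_2^+=0$, so $\psi_1^\pm+\psi_2^\mp$ is a single orbital $[a0]$ or $[b0]$ in either sign choice, and $\delta$ of a single pointon is nowhere zero by Proposition~\ref{ngu3}(i); similarly for charge $+1$. Hence the only candidate charged pairs are the opposite-charge pairs $\{[a0],[0b]\}$, and no charged/neutral pair fuses.

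It then remains to run three cases. \emph{Case 1: both orbitals in $\{[\uparrow\downarrow],[\downarrow\uparrow]\}$.} Since $\delta$ kills each summand, $\delta(\psi_1+\psi_2)\equiv0$ and (\ref{can't fuse}) holds trivially, so $\{[\uparrow\downarrow],[\uparrow\downarrow]\}$, $\{[\downarrow\uparrow],[\downarrow\uparrow]\}$, $\{[\uparrow\downarrow],[\downarrow\uparrow]\}$ all fuse; these are exactly $\{[ab],[ab]\}$ and $\{[ab],[ba]\}$ with $a\neq b$. \emph{Case 2: the opposite-charge pairs $\{[a0],[0b]\}$.} Here $\delta([a0]+[0b])=\delta[ab]$, which is identically zero for $a\neq b$ and vanishes at $t=\tfrac{(2n+1)\pi}{2\omega}$ for $a=b$ by (\ref{ttngu})--(\ref{ttngu2}); and (\ref{can't fuse}) holds since both factors are single pointons. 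So all four pairs $\{[a0],[0b]\}$ fuse. \emph{Case 3: neutral pairs from the four orbitals $[\uparrow\uparrow],[\downarrow\downarrow],[\downarrow\!*],[*\!\downarrow]$}, the only neutral orbitals surviving Case~1's dichotomy. Using $\delta[\uparrow\uparrow]=-\delta[\downarrow\downarrow]$ with the explicit form (\ref{ttngu}), and the explicit $\delta[\downarrow\!*],\delta[*\!\downarrow]$ from (\ref{co}), a direct evaluation of $\delta(\psi_1+\psi_2)$ for each of the ten pairs shows: $\{[\uparrow\uparrow],[\uparrow\uparrow]\}$ and $\{[\downarrow\downarrow],[\downarrow\downarrow]\}$ vanish periodically; $\{[\uparrow\uparrow],[\downarrow\downarrow]\}$ vanishes identically; $\{[\downarrow\!*],[*\!\downarrow]\}$ vanishes periodically (this is (\ref{updown})); while each of the six remaining pairs --- $[\downarrow\!*]$ or $[*\!\downarrow]$ paired with $[\uparrow\uparrow]$, with $[\downarrow\downarrow]$, or with itself --- gives a $\delta(\psi_1+\psi_2)$ having a component equal, up to a nonzero multiple, to $2\cos\omega t\pm i\sin\omega t$ or to $e^{\pm i\omega t}$, hence nowhere zero, so they do not fuse. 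Collecting the survivors of the three cases gives exactly $\{[a0],[0b]\}$, $\{[ab],[ab]\}$, $\{[ab],[ba]\}$, $\{[aa],[bb]\}$, and $\{[\downarrow\!*],[*\!\downarrow]\}$ with $a,b\in\{\uparrow,\downarrow\}$.

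The ten two-by-two computations in Case~3 are routine. The only genuinely delicate point is the bookkeeping in the second reduction: reading (\ref{interaction equation}) so that a charged/neutral mixed pair and an equal-charge charged pair are truly excluded rather than spuriously admitted through the vacuous identity $\delta(0)=0$, and making sure that no neutral pair outside $\{[\uparrow\downarrow],[\downarrow\uparrow]\}$ is overlooked.
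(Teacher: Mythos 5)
Your proposal is correct and follows essentially the same route as the paper: an exhaustive check of all orbital pairs against (\ref{interaction equation}) and (\ref{can't fuse}), using the displayed computations (\ref{ttngu}), (\ref{ttngu2}), (\ref{co}) and (\ref{updown}); the paper simply declares the verification ``straightforward'' and exhibits the same two representative exclusions (a $*$-orbital paired with itself, and an $[aa]$ paired with $[bc]$, $b\neq c$) that your two reductions generalize. Your explicit treatment of the mixed charged/neutral pairs and equal-charge pairs fills in bookkeeping the paper leaves implicit, but it is the same argument, not a different method.
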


\begin{proof}
The allowed fusions are straightforward to verify.
For example, ${[\downarrow \! {*}]}$ and ${[\downarrow \! {*}]}$ cannot fuse since $\delta({[\downarrow \! {*}]} + {[\downarrow \! {*}]} )(t) \not = 0$ for all $t$ by (\ref{co}).
Furthermore, the orbitals ${[aa]}$ and ${[bc]}$, with $a,b,c \in \{ \uparrow, \downarrow \}$, $b \not = c$, cannot fuse by (\ref{can't fuse}), since $\delta {[ bc]} \equiv 0$ by (\ref{ttngu2}), whereas $\delta {[aa]} \not \equiv 0$ by (\ref{ttngu}).
\end{proof}

This theorem will play an essential role in determining interaction vertices in \cite{B1}.

\section{Color charge} \label{color section}

Let $\beta \subset \tilde{M}$ be the worldline of a timelike pointon with $4$-velocity $v$, and let $e_0 = v, e_1, e_2, e_3$ be an orthonormal tetrad along $\beta$.
An essential feature of internal spacetime geometry has been the identification of a free orientation $o_0 \in \{ \pm 1 \}$ of $e_0$ with the electric charge $e^{\pm}$ of the pointon.
We extend this identification to spacelike orientations: for $i \in \{ 1,2,3\}$, we identify a free orientation $o_i \in \{ \pm 1 \}$ of $e_i$ with a \textit{color charge} $c_i^{o_i}$.
There are therefore three color charges (we may call $c_1^-$, $c_2^-$, $c_3^-$ red, green, blue; and $c_1^+$, $c_2^+$, $c_3^+$ anti-red, anti-green, anti-blue).

Fix a (nonphysical) choice of orientation $o_{\tilde{M}} \in \{ \pm 1 \}$ of $\tilde{M}_{\beta(t)}$, say $o_{\tilde{M}} = 1$.
Then, noting that $o_{\tilde{M}} = o_{0123} = o_0o_1o_2o_3$, we have 
\begin{equation*}
o_1 = o_{\tilde{M}} o_1 = o_{\tilde{M}} o_1^{-1} = o_0o_2o_3 = o_{023}.
\end{equation*}
Similarly, $o_2 = o_{013}$ and $o_3 = o_{012}$. 
Recall that the electric charge $o_0$ of a pointon arises because it has extent in the $e_0$ direction in $\tilde{M}$, that is, its $4$-velocity $v$ is $v = e_0$.
Correspondingly, a pointon has color charge $o_i = o_{0jk}$, with $i,j,k$ distinct, if it has extent along a $(2+1)$-dimensional hypersurface with spatial tangent space $e_j \wedge e_k$. 
The resulting hypersurface would then be a single point in spacetime $M$. 
To summarize, \textit{we identify the free orientation of a vanishing $2$-dimensional subspace with spin, and that of a vanishing $1$- or $3$-dimensional subspace with a charge;} see Table \ref{charge and spin}.

\begin{table}
\label{charge and spin}
\caption{The pointon properties that emerge from vanishing subspaces of spacetime tangent spaces.}
\begin{center}
\begin{tabular}{|l|l|l|}
\hline
vanishing subspace & free orientation & identification\\
\hline \hline
$e_0$ & $o_0 = o_{123}$ & electric charge $e^{o_0}$\\
\hdashline
$e_0 \wedge e_2 \wedge e_3$ & $o_1 = o_{023}$ & color charge $r^{o_1}$\\
$e_0 \wedge e_3 \wedge e_1$ & $o_2 = o_{013}$ & color charge $g^{o_2}$\\
$e_0 \wedge e_1 \wedge e_2$ & $o_3 = o_{012}$ & color charge $b^{o_3}$\\
\hdashline
$e_2 \wedge e_3$ & $o_{23} = o_{01}$ & spin in the direction $o_{01}e_1$\\
$e_3 \wedge e_1$ & $o_{13} = o_{02}$ & spin in the direction $o_{02}e_2$\\
$e_1 \wedge e_2$ & $o_{12} = o_{03}$ & spin in the direction $o_{03}e_3$\\
\hline
\end{tabular}
\end{center}
\end{table}

We develop a geometric model of color charge based on these identifications in \cite{B2}.
From this model we find that
\begin{itemize}
 \item[(\textsc{i})] only orbitals with a single pointon may have nonzero color charge; and
 \item[(\textsc{ii})] a geom can have at most one unit of color charge, just as with electric charge. 
\end{itemize}
Thus, the only orbitals that may possess color charge are ${[a0]}$, ${[0a]}$, with $a \in \{ \uparrow, \downarrow \}$, and ${[\downarrow \! *]}$, ${[* \! \downarrow ]}$.
We denote these colored orbitals by
\begin{equation*}
\textcolor{red}{\pmb{(}} a0], \ \ \ \ [0 a \textcolor{red}{\pmb{)}}, \ \ \ \ [ \, \downarrow \textcolor{red}{\pmb{)}} = [\downarrow \! * \textcolor{red}{\pmb{)}}, \ \ \ \ \textcolor{red}{\pmb{(}} \uparrow \,  ] = \textcolor{red}{\pmb{(}} * \! \downarrow ],
\end{equation*}
where a colored left (resp.\ right) rounded bracket denotes color charge $c^-_i$ (resp.\ $c^+_i$).
We show in \cite{B2} that for $*$-orbitals, the rounded bracket must be adjacent to the $*$ component. 
Furthermore, colored $*$-orbitals are spin states of a \textit{charged} spin-$\tfrac 12$ particle, and thus there are four such states (positive/negative charge and spin up/down). 
This can only be obtained, then, by allowing $\uparrow$ arrows in colored $*$-orbitals,
\begin{equation*} \label{four}
[ \, \downarrow \textcolor{red}{\pmb{)}} = [\downarrow \! * \textcolor{red}{\pmb{)}}, \ \ \ \ 
\textcolor{red}{\pmb{(}} \downarrow \,  ] = \textcolor{red}{\pmb{(}} * \! \uparrow], \ \ \ \ 
\textcolor{red}{\pmb{(}} \uparrow \,  ] = \textcolor{red}{\pmb{(}} * \! \downarrow], \ \ \ \ 
[ \, \uparrow \textcolor{red}{\pmb{)}} = [\uparrow \! * \textcolor{red}{\pmb{)}}.
\end{equation*}

We denote a geom with total color charge $c^-_i$ (resp.\ $c^+_i$) by a colored left (resp.\ right) rounded bracket, just as for orbitals.
The set of all geoms with color is given in Table \ref{table1}.
The electric charge of a geom with color is obtained by first substituting $\pm \tfrac 13$ for each colored geom component with electric charge $\pm 1$, 
\begin{equation} \label{color electric}
c_i^{o_i} \mapsto \tfrac 13 e^{o_i},
\end{equation}
and then summing the electric charges of each component, just as in Remark \ref{observations}.

\section{A composite model of the standard model particles} \label{composite section}

Remarkably, the set of all possible geoms reproduces precisely three generations of leptons and quarks, the electroweak gauge bosons, the Higgs boson, and five new neutral bosonic polarization states,
\begin{equation} \label{new polarizations}
x_0 = {[\downarrow \! *, * \! \downarrow]}, \ \ \ x_{ab} = {[a 0, 0 b]},
\end{equation}
with $a,b \in \{ \uparrow, \downarrow \}$. 
These new polarization states partition into one of the following:
\begin{itemize}
 \item[(i)] one massive spin-$2$ boson with polarization basis $\{ x_0, x_{\uparrow \downarrow}, x_{\downarrow \uparrow}, x_{\uparrow \uparrow}, x_{\downarrow \downarrow} \}$;
 \item[(ii)] one massive spin-$1$ boson and one massless spin-$1$ boson, with respective bases $\{ x_0, x_{\uparrow \downarrow}, x_{\downarrow \uparrow} \}$, $\{ x_{\uparrow \uparrow}, x_{\downarrow \downarrow} \}$; or
 \item[(iii)] one spin-$0$ boson and two massless spin-$1$ bosons, with respective bases $\{ x_0 \}$, $\{ x_{\uparrow \downarrow}, x_{\downarrow \uparrow} \}$, $\{ x_{\uparrow \uparrow}, x_{\downarrow \downarrow} \}$.
\end{itemize}
Based on a pattern of geom masses given in \cite{B1}, we expect that (i) holds.
Furthermore, in \cite{B2} we model gluons as extended objects similar to flux tubes or strings, using internal spacetime geometry.
In particular, gluons are not point particles in our model, and thus are not geoms.

\begin{Theorem}
The set of all geoms partition precisely into spin/polarization bases for each standard model fermion, electroweak boson, and the Higgs boson, together with the new bosonic polarization basis (\ref{new polarizations}).
Furthermore, each geom has the correct spin type (fermion or boson), electric charge, and color charge as the corresponding standard model particle. 
\end{Theorem}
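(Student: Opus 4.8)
The statement is a finite combinatorial verification, so the plan is to enumerate every geom, compute for each its electric charge, color charge, spin statistics, and total angular momentum, and then match the resulting list against the standard model particle content recorded in Table \ref{table1}. First I would list the orbitals: by Section \ref{Lagrangian section} the uncolored orbitals are $\mathcal{O}_5 = \{[\uparrow\!0],[\downarrow\!0],[0\!\uparrow],[0\!\downarrow]\}$, $\mathcal{O}_0 = \{[\uparrow\downarrow],[\downarrow\uparrow],[\downarrow\!{*}],[{*}\!\downarrow]\}$, and the two-pointon neutral bound states $[\uparrow\uparrow]$, $[\downarrow\downarrow]$; by Section \ref{color section} each single-pointon orbital $[a0]$, $[0a]$, $[\downarrow\!{*}]$, $[{*}\!\downarrow]$ occurs in addition in a colored version carrying a charge $c_i^{\pm}$. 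For every orbital I record its pointon content (for each constituent pointon, its electric or color charge and its spin $\uparrow$ or $\downarrow$) and whether $\delta\psi \equiv 0$.

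Next I would list the geoms. Using the fusion theorem of Section \ref{Lagrangian section}, the unordered pairs of orbitals that may fuse are $\{[a0],[0b]\}$, $\{[ab],[ab]\}$, $\{[ab],[ba]\}$, $\{[aa],[bb]\}$ and $\{[\downarrow\!{*}],[{*}\!\downarrow]\}$ with $a,b\in\{\uparrow,\downarrow\}$, together with the colored variants in which exactly one orbital of a pair carries color (constraints (\textsc{i}) and (\textsc{ii}) of Section \ref{color section}). Discarding the pairs forbidden by the Pauli exclusion principle --- those containing two pointons of equal charge and equal spin --- yields the finite set of all geoms. For each surviving geom $[\psi_1,\psi_2]$ I then compute three invariants: its electric charge, namely the sum of the electric charges of its constituent pointons with the substitution $c_i^{o_i}\mapsto \tfrac13 e^{o_i}$ of (\ref{color electric}) applied to colored pointons; its statistics, bosonic iff the total number of pointons is even and fermionic iff odd, by Remark \ref{observations}; and its color, $c_i^{\pm}$ if exactly one constituent orbital is colored and uncolored otherwise. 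One checks directly that these reproduce the ``elec.\ charge'' column, the fermion/boson split, and the colored-bracket decoration of Table \ref{table1}.

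Finally I would compute the total angular momentum of each geom from the spins of its constituent pointons and from the $e^{\pm i\omega t}$ phases of its orbitals, which encode the rotation sense of the spinor particles (cf.\ Proposition \ref{ngu}), and group the geoms of equal charge, color, and statistics into the $2s+1$ polarization states of a single spin-$s$ field. Reading these multiplets off Table \ref{table1}, they are exactly: three color triplets of up-type quarks (charge $\tfrac23$) and down-type quarks (charge $-\tfrac13$) in three generations; three generations of charged leptons (charge $-1$) and of neutral neutrinos; the bosons $W^{\pm}$, $Z$, $\gamma$, $H$; and a residual set of five neutral bosonic states $x_0 = [\downarrow\!{*},{*}\!\downarrow]$ and $x_{ab} = [a0,0b]$, which is precisely the polarization basis (\ref{new polarizations}). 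This establishes both the claimed partition into standard model multiplets and the agreement of spin type, electric charge, and color charge.

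The bulk of the work is bookkeeping --- making the enumerations of the orbitals and the geoms exhaustive and non-redundant --- but the substantive point is the multiplet assembly in the last step: one must verify that the geoms collected under a given particle really do transform among themselves as a single spin-$s$ representation (for instance, that the two states assigned to each quark form a genuine spin doublet rather than two uncoupled spin-$0$ states), and that the five residual $x$-states are irreducible, as in option (i) of Section \ref{composite section}. This requires tracking the Lorentz action on the constituent spinors via the analysis behind Proposition \ref{ngu}, and it is where the identification with the standard model --- as opposed to a bare charge count --- actually gets pinned down. I expect this to be the main obstacle; the rest is direct case-checking.
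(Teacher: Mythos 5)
Your overall strategy (a finite check of Table \ref{table1}, with electric charge and statistics read off via Remark \ref{observations} and (\ref{color electric}), and color via (\textsc{i})--(\textsc{ii}) of Section \ref{color section}) matches the paper, but your enumeration step rests on a conflation that would break it. You propose to generate the set of geoms from the fusion theorem at the end of Section \ref{Lagrangian section}, i.e.\ from the pairs $\{[a0],[0b]\}$, $\{[ab],[ab]\}$, $\{[ab],[ba]\}$, $\{[aa],[bb]\}$, $\{[\downarrow\!{*}],[{*}\!\downarrow]\}$. Fusion, however, is the two-valent pair creation/annihilation process (item (\textsc{b})), not the criterion for a geom: a geom $[\psi_1,\psi_2]$ is a \emph{bound state} formed when two orbitals couple via the mass term $\bar{\psi}_1\hat{\omega}\psi_2$ (item (\textsc{a})). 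Most geoms in Table \ref{table1} are not fusion pairs at all --- e.g.\ $e_{\uparrow}=[00,\uparrow\!0]$, $\mu_{\uparrow}=[\downarrow\uparrow,\uparrow\!0]$, $\nu_{\mu}=[\downarrow\uparrow,{*}\!\downarrow]$, $W^-_{\uparrow}=[{*}\!\downarrow,\uparrow\!0]$ --- so your enumeration would omit every fermionic geom (all leptons and quarks) and the $W^{\pm}$ states, exactly the particles the theorem is about. The paper instead takes the geoms to be pairs of orbitals constrained by the Pauli exclusion principle together with the coupling condition, and its proof supplies the case checks you are missing: the neutral orbitals ${[\downarrow\!{*}]}$, ${[{*}\!\downarrow]}$ may couple to form $x_0$ by (\ref{updown}) and (\ref{interaction equation}); the pairs $[a0]$, $[0b]$ may couple by Lemma \ref{[ab]}, yielding $[a0,0b]$ or the fused $[ab,00]$; and --- the key negative result, absent from your proposal --- same-charge single-pointon orbitals $[a0]$, $[b0]$ (resp.\ $[0a]$, $[0b]$) cannot couple, because $\psi_1^+=\psi_2^+=0$ (resp.\ $\psi_1^-=\psi_2^-=0$) kills the chiral cross terms $\bar{\psi}_1^{\mp}\hat{\omega}\psi_2^{\pm}$. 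This last point is what excludes charge-$\pm 2$ geoms, and without it the agreement with the standard model charge spectrum is not established.

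Two smaller calibration points. The theorem only claims the correct \emph{spin type} (fermion or boson), which follows from the parity of the pointon number as you say; the full Lorentz-multiplet verification you flag as the ``main obstacle'' is not part of the paper's claim or proof, and likewise the irreducibility of the five $x$-states (option (i) of Section \ref{composite section}) is explicitly left as an expectation based on \cite{B1}, not something to be proved here. So the genuine work you need to redo is the enumeration: replace the fusion criterion by the coupling criterion plus Pauli exclusion, and add the vanishing-mass-term argument excluding $[a0,b0]$ and $[0a,0b]$.
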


\begin{proof}
The standard model geoms are given in Table \ref{table1}.
Their spin types and electric charges are obtained from Remark \ref{observations} and (\ref{color electric}), and their color charges are obtained from (\textsc{i}) and (\textsc{ii}) in Section \ref{color section}. 

The neutral pointon spinors $\psi_1 = {[ \, \downarrow \, ]} = {[\downarrow \! *]}$, $\psi_2 = {[ \, \uparrow \, ]} = {[* \! \downarrow]}$ may couple to form the geom $x_0$, by (\ref{updown}) and (\ref{interaction equation}).
In contrast, (\ref{interaction equation}) implies that the charged pointon spinors $\psi_1 = {[a0]}$, $\psi_2 = {[b0]}$ (resp.\ $\psi_1 = {[0a]}$, $\psi_2 = {[0b]}$), with $a,b \in \{ \uparrow, \downarrow \}$, cannot couple to form the geom ${[a0,b0]}$ (resp.\ ${[0a,0b]}$).
Indeed, we have $\psi_1^+ = \psi_2^+ = 0$ (resp.\ $\psi_1^- = \psi^-_2 = 0$).
Finally, $\psi_1 = {[a0]}$ and $\psi_2 = {[0b]}$ may couple by Lemma \ref{[ab]}, yielding either ${[a0,0b]}$ or their fusion ${[ab,00]}$.\footnote{Positronium is a bound state of an electron and positron, each with its own worldline.
The geoms ${[a0,0b]}$, ${[ab,00]}$ are in a sense also bound states of an electron and positron, but with a single shared wordline.}
\end{proof}

Geom interactions, including the interaction vertices of the predicted new boson, are investigated in \cite{B1}.

\begin{Remark} \rm{
The four polarization states $x_{ab}$, $a,b \in \{ \uparrow, \downarrow \}$, are excluded if we impose the condition that if a geom has two nonempty orbitals, then at least one must be `full' (in the sense that both of its components are nonzero).
The state $x_0$ is excluded if we impose the condition that a geom may contain at most one $*$-component.
Although the standard model geoms would be unaffected by imposing either condition, we currently have no reason to assume these extra conditions.
}\end{Remark}

\appendix \label{appendix}
\section{Mass shell assumptions in derivations of $E_0 = m$}

We consider three `derivations' of $E_0 = mc^2$, and show that no contradiction arises between these derivations 
and our generalization $E_0 = mcu$.

\subsection{By definition}

Consider a particle with $4$-momentum $k^a_0 = (E_0/c, 0, 0, 0)$ in its rest frame.
It is standard to define the particle's mass $m$ to be the norm of $k^a$ divided by $c$, that is, $k^2 = k^a k_a = (mc)^2$.
Consequently, $E_0/c = mc$.
We emphasize that this is not a \textit{derivation} of $E_0 = mc^2$, but simply a consequence of the \textit{definition} of $m$.

Indeed, we may instead define the particle's mass $m$ by $k^2 = (mu)^2$, where $u$ is some parameter intrinsic to the particle (in our case the velocity of the pointon's spinor particle).
It then follows that $E_0/c = mu$, whence $E_0 = mcu$.\footnote{In this case, the particle's kinetic energy is $\tfrac 12 (u/c) mv^2$ since
\begin{equation*}
E/c = (E_0/c) \frac{dt}{d\tau} = mu \frac{dt}{d\tau} = mu \left( 1 - \tfrac{v^2}{c^2} \right)^{-1/2} = mu + \tfrac 12 (u/c^2) mv^2 + O(v^4).
\end{equation*}}

\subsection{Derivation 1}

The following is based on \cite[p.\ 254]{TW}.
Consider a closed box of mass $M$ and width $L$ that is initially at rest.
An on-shell photon with energy $E_{\gamma}$ is emitted from, say, the left wall, travels to the right, and is then absorbed by the right wall.
By the conservation of $4$-momentum, the box must move with some speed $v$ to the left while the photon travels from the left side of the box to the right.
Let $\Delta x$ be the total distance the box is moved.
The total distance the photon travels between emission and absorption is then $L - \Delta x$. 
Since the photon is on shell, it travels with speed $c$, and thus the time from emission to absorption is $t = (L - \Delta x)/c$.
Let $m$ be the mass transported by the photon from the left wall to the right wall.

The $3$-momentum $Mv$ of the moving box is equal and opposite to the $3$-momentum $p_{\gamma}$ of the photon; whence
\begin{equation*}
Mv = -p_{\gamma}  = -E_{\gamma}/c,
\end{equation*}
where the second equality holds because the photon is on shell.
Therefore
\begin{equation*}
\Delta x = vt = -E_{\gamma}(Mc)^{-1} \cdot (L- \Delta x)c^{-1}.
\end{equation*}
But the center-of-mass of the photon + box system is stationary, and so
\begin{equation*}
M \Delta x + m(L - \Delta x) = 0.
\end{equation*}
Consequently, 
\begin{equation*}
m = -\frac{\Delta x M}{L - \Delta x} = -\frac{E (L- \Delta x)}{Mcu} \cdot \frac{M}{L - \Delta x} = \frac{E_{\gamma}}{c^2}.
\end{equation*}

This derivation is compatible with $E_{\gamma} = mcu$ since it assumes the photon is on shell, in which case $u = c$. 

\subsection{Derivation 2}

The following is due to Einstein \cite[p.\ 16]{E}, and is based on \cite[III.6]{Z}.
Consider a particle at rest that emits two on-shell photons of energy $E_{\gamma}$ in opposite directions.
Denote by $m_1$ and $E_1$ (resp.\ $m_2$ and $E_2$) the mass and rest energy of the particle before (resp.\ after) the emission. 
Let $\pm \theta$ be the angles the photons make with the $x$-axis in a frame moving with velocity $-v$ in the $x$-direction relative to the particle.
Then, by $3$-momentum conservation,
\begin{equation*}
m_1 v = m_2 v + 2 p_{\gamma, x} = m_2 v + 2 \cos \theta E_{\gamma}/c = m_2 v + 2 (v/c) E_{\gamma}/c,
\end{equation*}
where $p_{\gamma, x} = \cos \theta E_{\gamma}/c$ and $\cos \theta = v/c$ since the photons are on shell, whence the norms of their $4$-momenta and $4$-velocities vanish.
Thus, $E_{\gamma} = \tfrac 12 (m_1 - m_2) c^2$.
Furthermore, energy conservation in the rest frame of the particle implies
\begin{equation*}
E_1 = E_2 + 2 E_{\gamma} = E_2 + (m_1 - m_2)c^2.
\end{equation*}
Therefore,
\begin{equation*}
\Delta E = E_1 - E_2 = (m_1 - m_2)c^2 = \Delta m c^2.
\end{equation*}

This derivation is also compatible with $E_0 = mcu$ since it also assumes the photons are on shell.

\ \\
\textbf{Acknowledgments.}
The author thanks an anonymous referee for their careful reading and helpful comments.
The author was supported by the Austrian Science Fund (FWF) grant P 34854.

\bibliographystyle{hep}

\begin{thebibliography}{10}
\bibitem[Ba]{Ba} E.\ Battista, Nonsingular bouncing cosmology in general relativity: physical analysis of the spacetime defect, \textit{Class.\ Quantum Grav.}, 38:195007, 2021.
\bibitem[B1]{B1} \bysame, A combinatorial derivation of the standard model interactions from the Dirac Lagrangian, arXiv:1908.09631. 
\bibitem[B2]{B2} \bysame, An internal spacetime model of baryons and mesons, in preparation.
\bibitem[B3]{B3} C.\ Beil, Nonnoetherian geometry, \textit{Journal of Algebra and its Applications}, 15(09), 2016.
\bibitem[B4]{B4} \bysame, Nonnoetherian singularities and their noncommutative blowups, \textit{Journal of Noncommutative Geometry}, 2023.
\bibitem[B5]{B5} \bysame, On the central geometry of nonnoetherian dimer algebras, \textit{Journal of Pure and Applied Algebra}, 225(8), 2021.
\bibitem[B6]{B6} \bysame, Spacetime geometry of spin, polarization, and wavefunction collapse, available at https://osf.io/preprints/x97uv/.
\bibitem[E]{E} A.\ Einstein, \textit{Technion Yearbook 5}, 1946.
\bibitem[FHVWK]{FHVWK} S.\ Franco, A.\ Hanany, D.\ Vegh, B.\ Wecht, K.\ Kennaway, Brane dimers and quiver gauge theories, \textit{J.\ High Energy Phys.}, 01:096, 2006.
\bibitem[F]{F} T.\ Frankel, \textit{The Geometry of Physics}, Cambridge University Press, 2004.
\bibitem[HK]{HK} A.\ Hanany, K.\ D.\ Kennaway, Dimer models and toric diagrams, arXiv:0503149.
\bibitem[K1]{K1} F.\ R.\ Klinkhamer, Regularized big bang singularity, \textit{Phys.\ Rev.\ D}, 100(2):023536, 2019.
\bibitem[K2]{K2} \bysame, More on the regularized big bang singularity, \textit{Phys.\ Rev.\ D}, 101(6):064029, 2020.
\bibitem[K3]{K3} \bysame, M-theory and the birth of the Universe, \textit{Acta Phys.\ Pol.\ B} 52:1007, 2021.
\bibitem[KW]{KW} F.\ R.\ Klinkhamer and Z.\ L.\ Wang, Nonsingular bouncing cosmology from general relativity, \textit{Phys.\ Rev.\ D}, 100(8):083534, 2019.
\bibitem[KS]{KS} S.\ Kochen and E.\ Specker, The problem of hidden variables in quantum mechanics, \textit{Journal of Mathematics and Mechanics}, 17:59-87, 1967.
\bibitem[LW]{LW} J.\ Lewandowski and J.\ Wisniewski, Degenerate sectors of the Ashtekar gravity, \textit{Class.\ Quantum Grav.}, 16:3057, 1999.
\bibitem[St]{St} E.\ Stueckelberg, Helvetica Physica Acta, 14:51-80, 1941.
\bibitem[TW]{TW} E.\ F.\ Taylor and J.\ A.\ Wheeler, \textit{Spacetime Physics}, W.\ H.\ Freeman and Company, 1992.
\bibitem[Z]{Z} A.\ Zee, \textit{Einstein Gravity in a Nutshell}, Princeton University Press, 2013.
\end{thebibliography}
\def\cprime{$'$} \def\cprime{$'$}

\end{document}